\newtheorem*{theorem*}{Theorem}
\theoremstyle{plain}
\newtheorem{theorem}{Theorem}[section]
\newtheorem{proposition}[theorem]{Proposition}
\newtheorem{lemma}[theorem]{Lemma}
\newtheorem{corollary}[theorem]{Corollary}
\theoremstyle{definition}
\newtheorem{example}[theorem]{Example}
\newtheorem{remark}[theorem]{Remark}
\numberwithin{equation}{section}
\newcommand{\abs}[1]{\lvert#1\rvert}
\newcommand{\norm}[1]{\lVert#1\rVert}
\newcommand{\bignorm}[1]{\bigl\lVert#1\bigr\rVert}
\newcommand{\Bignorm}[1]{\Bigl\lVert#1\Bigr\rVert}
\newcommand{\N}{{\mathbb N}}
\newcommand{\E}{{\mathbb E}}
\newcommand{\bP}{{\mathbb P}}
\newcommand{\R}{{\mathbb R}}
\newcommand{\cX}{{\mathcal X}}
\newcommand{\cY}{{\mathcal Y}}
\newcommand{\one}{\mathbf{1}}
\newcommand{\ol}{\overline}
\title{Automatic Fatou Property of Law-invariant Risk Measures}
\author[S.~Chen]{Shengzhong Chen}
\address{Department of Mathematics, Ryerson University, 350 Victoria Street, Toronto, Canada M5B 2K3}
\email{sz.chen@ryerson.ca}
\author[N.~Gao]{Niushan Gao}
\address{Department of Mathematics, Ryerson University, 350 Victoria Street, Toronto, Canada M5B 2K3}
\email{niushan@ryerson.ca}
\author[D.~Leung]{Denny H.~Leung}
\address{Department of Mathematics, National University of Singapore, Singapore 117543}
\email{dennyhl@u.nus.edu}
\author[L.~Li]{Lei Li}
\address{School of Mathematical Sciences and LPMC,  Nankai University, Tianjin 300071, China
}
\email{leilee@nankai.edu.cn}
\thanks{The first author is supported by an NSERC fellowship. The second author acknowledges support of an NSERC Discovery Grant. The fourth author is supported by NSF of China (12171251).}
\keywords{Automatic Fatou property, automatic continuity, automatic dual representation, law invariance, risk measures, convex functionals}
\subjclass[2010]{91G70, 91B30, 46E30}
\date{\today}
\begin{document}
\maketitle
\begin{abstract}
In the paper we investigate automatic Fatou property of law-invariant risk measures on a rearrangement-invariant function space $\cX$ other than $L^\infty$. The main result is the following characterization:  Every real-valued, law-invariant, coherent risk measure on $\cX$ has the Fatou property at every random variable $X\in \cX$ whose negative tails have vanishing norm (i.e., $\lim_n\norm{X\one_{\{X\leq -n\}}}=0$) if and only if  $\cX$ satisfies the Almost Order Continuous Equidistributional Average (AOCEA) property, namely, $\mathrm{d}(\mathcal{CL}(X),\cX_a) =0$ for any $X\in \cX_+$, where $ \mathcal{CL}(X)$ is the convex hull of all random variables having the same distribution as $X$ and  $\cX_a=\{X\in\cX:\lim_n\norm{X\one_{\{\abs{X}\geq n\}}}=0\}$.
As a consequence, we   show that under the AOCEA property,   every real-valued, law-invariant, coherent risk measure on $\cX$ admits a tractable dual representation at every $X\in \cX$ whose negative tails have vanishing norm.
Furthermore, we show that  the AOCEA property is satisfied by most classical model spaces, including    Orlicz spaces, and therefore the foregoing results have wide applications.
\end{abstract}

\section{Introduction}

The axiomatic theory of risk measures has been an active research area ever since their introduction in the landmark paper by Artzner et al \cite{ArtznerDelbaenEberHeath1999}. Many properties of risk measures have been investigated in depth, and rigorous debates have taken place as to which of these properties ought to be regarded as a natural part of the definition of risk measures.  Among the properties considered, law invariance is regarded as highly relevant in practice as real-world computations of risk are often based on probability distributions of financial positions. As a matter of fact, most concrete risk measures (such as Value-at-risk, Expected shortfall,  Haezendonck-Goovaerts risk measures) and special classes of risk measures (such as distortion and quantile-based risk measures) are indeed law invariant. Research on general law-invariant risk measures has been intense and has produced many profound results; see, e.g., Bellini et al \cite{BC,BCb}, Chen et al \cite{CGX}, Filipovi\'c and  Svindland \cite{FS08,FS12}, Gao et al \cite{GaoLeungMunariXanthos2017}, Jouini et al \cite{Joui06,Joui08}, Kr\"{a}tschmer et al \cite{KSZ}, Kusuoka \cite{K01}, Liu et al \cite{LW}, Wang and Zitikis \cite{WR}, and Weber \cite{W}.

\medskip

Below we provide a brief review of one specific direction of research    that highlights the importance of law invariance  and motivates the present paper. 
In the context of a convex risk measure, or more generally, a  convex functional $\rho$ defined on a   function space $\cX$, lower semicontinuity with respect to a weak topology $\sigma(\cX,\cY)$ determined by a dual space $\cY$ leads to a dual representation of $\rho$, thanks to the well-known Fenchel-Moreau Duality:
\begin{align}\label{dualgeneral}
    \rho(X) = \sup_{Y\in \cY}\big(\langle X,Y\rangle -\rho^*(Y)\big), \quad X\in \cX,
\end{align}
where 
\[ \rho^*(Y) = \sup_{X\in \cX}\big(\langle X,Y\rangle - \rho(X)\big),\quad Y\in\cY.\]
In general, such dual representations play an important role in optimization
and portfolio selection.
For practical purposes, it is particularly desirable to be able to take the dual space $\cY$ as a space of functions (as opposed to abstract linear functionals). In this case, the representation \eqref{dualgeneral} is generally deemed as tractable or manageable.  
A concrete, and more verifiable, alternative to topological lower semicontinuity is order lower semicontinuity. It is usually termed as the Fatou property in the actuarial and risk management literature.  Specifically, a functional $\rho :\cX\to (-\infty,\infty]$ defined on a  function space $\cX$ is said to satisfy the {\em Fatou property}  if 
\[ \rho(X) \leq \liminf \rho(X_k) \text{ whenever } X_k \to  X \text{ a.s.\ and } |X_k| \leq Y \text{ for some $Y \in \cX$}.
\]
In the first result of its kind, Delbaen \cite{D} showed that  if a proper   convex functional $\rho$ on $L^\infty$ satisfies the (easier to verify) Fatou property, then it is $\sigma(L^\infty,L^1)$ lower semicontinuous and thus enables a tractable dual representation with $L^1$ as dual space by the Fenchel-Moreau Duality. 
Biagini and Frittelli \cite{BF} and Delbaen and Owari \cite{DO} then  drew attention to  the following natural question:\smallskip
\begin{center}
\em if a proper convex functional $\rho$  on a general function space $\cX$ satisfies\\ the Fatou property, must it admit a tractable dual representation? 
\end{center} \smallskip
Gao et al \cite{GLX2} showed that in general the answer    is no for the class of  Orlicz spaces.
However, it was later  proved that, surprisingly, if $\rho$ is additionally \emph{law invariant} then  the answer is yes! See Gao et al \cite{GaoLeungMunariXanthos2017} for Orlicz spaces and 
Tantrawan and Leung \cite{TL} for  general rearrangement-invariant  spaces. 
These results highlight the superior behavior that law invariance brings to the study of risk measures.
Additional  developments on the Fatou property and tractable dual representations of law-invariant risk measures can be found in the recent  works of Bellini et al \cite{BC, BCb, BCc},  Chen et al \cite{CGX},   Filipovi\'c et al \cite{FS12}, Gao et al \cite{GMX, GX}, Liu et al \cite{LCLW} and Svindland \cite{Svindland}. 

Perhaps the most striking result demonstrating the power of law invariance in connection with the Fatou property is   the following theorem.
\begin{theorem*}[Jouini et al \cite{Joui06}]
A real-valued, convex, decreasing, law-invariant functional on $L^\infty$ has the Fatou property. Consequently, it is $\sigma(L^\infty, L^1)$ lower semicontinuous and admits a dual representation via $L^1$.
\end{theorem*}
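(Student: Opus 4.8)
The plan is to prove the Fatou property and then read off the stated consequences from the earlier results: once $\rho$ has the Fatou property it is $\sigma(L^\infty,L^1)$ lower semicontinuous by Delbaen's theorem \cite{D}, and the tractable dual representation via $L^1$ then follows from the Fenchel--Moreau duality \eqref{dualgeneral}. So everything reduces to the Fatou property. I would first carry out the standard reduction to an atomless, standard probability space (the general case follows by passing to an atomless sub-$\sigma$-algebra on which law invariance lets one reconstruct $\rho$), and record the free input that a finite-valued convex functional on the Banach space $L^\infty$ is automatically norm-continuous; in particular every sublevel set $\{\rho\le c\}$ is convex, norm-closed, solid upward (by monotonicity) and law-invariant.

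The core reduction uses law invariance to pass to quantile functions. Since $\rho(X)=\rho(q_X(U))$ for $U\sim\mathrm{Unif}[0,1]$, where $q_X$ is the decreasing rearrangement of $X$, the functional descends to a convex, decreasing, real-valued functional $\Phi$ on the cone of bounded decreasing functions, $\Phi(q)=\rho(q(U))$. If $X_k\to X$ a.s. with $\abs{X_k}\le Y\in L^\infty$, then the laws converge weakly, hence $q_{X_k}\to q_X$ at every continuity point and boundedly, i.e.\ $q_{X_k}\to q_X$ in $L^1(0,1)$. Thus the Fatou property for $\rho$ becomes lower semicontinuity of $\Phi$ along bounded, $L^1$-convergent sequences of decreasing functions. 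I would establish this by a finite-dimensional reduction: approximate $q_X$ from above and below by decreasing step functions $q^-\le q_X\le q^+$ adapted to a fine (dyadic) partition, which amounts to replacing $X$ by conditional expectations $\E[X\mid\mathcal{D}_n]$; on the finite-dimensional space $L^\infty(\mathcal{D}_n)\cong\R^{2^n}$ the functional $\Phi$ is a finite convex function and is therefore automatically continuous, so lower semicontinuity is free there. Monotonicity of $\Phi$ supplies the sandwich $\Phi(q^+)\le\Phi(q_X)\le\Phi(q^-)$, and the martingale convergence $\E[X\mid\mathcal{D}_n]\to X$ (bounded, a.s.) lets me pass from the step-function level back to $X$. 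In the paper's language this is exactly the statement that the dyadic conditional expectations, which lie in $\overline{\mathcal{CL}(X)}$, can be approximated within $\cX_a$; on $L^\infty$ we have $\cX_a=\cX$ and $\mathcal{CL}(X)$ is sup-norm bounded, so this approximation is unobstructed.

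The main obstacle is the gap between bounded a.s.\ (equivalently, distributional) convergence and convergence in the $L^\infty$-norm: norm-continuity of $\rho$ is not enough to pass to the limit because $q_{X_k}\to q_X$ only in $L^1$, and the failure of uniform convergence is concentrated precisely at the jumps of the limiting quantile function $q_X$. At such a jump the sandwiching functions $q^\pm$ differ by a bounded amount on a set of small measure but not in sup-norm, so the estimate there cannot be closed by norm-continuity alone; controlling this is genuinely where law invariance must be used. The device that resolves it is to keep everything at the level of equidistributed copies and their convex hulls $\mathcal{CL}(X)$: working with quantiles turns the jump-localized perturbations into symmetric, finite-dimensional data, where convexity forces continuity, and the Krein--Smulian theorem (reducing $\sigma(L^\infty,L^1)$-closedness of the convex set $\{\rho\le c\}$ to its closedness on each norm ball, on which $\sigma(L^\infty,L^1)$ is metrizable) upgrades the bounded, sequential statement to full lower semicontinuity. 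Assembling the sandwich estimates, letting the partition refine, and invoking Krein--Smulian then yields $\rho(X)\le\liminf_k\rho(X_k)$, which is the Fatou property.
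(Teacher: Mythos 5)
First, a point of reference: the paper does not prove this statement at all---it is quoted as a known theorem of Jouini--Schachermayer--Touzi \cite{Joui06}, and the paper's own machinery is deliberately built to exclude $L^\infty$, precisely because the ingredients it runs on fail there. So your proposal can only be judged on its own merits, and as it stands it has a genuine gap, located exactly at the point you yourself flag as ``where law invariance must be used.'' The most concrete symptom is the sentence ``on $L^\infty$ we have $\cX_a=\cX$ and $\mathcal{CL}(X)$ is sup-norm bounded, so this approximation is unobstructed.'' This is false: the paper states explicitly that $(L^\infty)_a=\{0\}$, and $L^\infty$ fails the AOCEA property (e.g.\ every $Y\in\mathcal{CL}(\one)$ equals $\one$, so $\mathrm{d}(\mathcal{CL}(\one),\{0\})=1$); this failure is the very reason the paper's Theorem \ref{main-thm} excludes $L^\infty$. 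Consequently the step you treat as ``unobstructed''---that the conditional expectations $\E[X|\pi]$ can be reached from $\mathcal{CL}(X)$---is exactly the nontrivial lemma you would need to prove, in the form: for every finite partition $\pi$, $\E[X|\pi]$ lies in the \emph{sup-norm} closure of $\mathcal{CL}(X)$. This is true on an atomless space, but for a reason specific to $L^\infty$ that never appears in your sketch: writing $X=q\circ\sigma$ cell by cell with $q$ monotone, the averages of $N$ cyclic rotations of $q$ are Riemann sums of a bounded monotone function, which converge to the cell average \emph{uniformly}, with error at most $2\norm{X}_\infty/N$. Monotonicity of quantiles is what upgrades $L^1$/ergodic-type convergence to sup-norm convergence; without this lemma the inequality $\rho(\E[Z|\pi])\leq\rho(Z)$ (dilatation monotonicity) is unavailable, and your finite-dimensional reduction has nothing to feed on.

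The end-game also does not close as described. The sandwich $\Phi(q^+)\leq\Phi(q_X)\leq\Phi(q^-)$ produces inequalities pointing the wrong way: from $q_X^{+,n}\geq q_X$ and $\Phi$ decreasing you get $\Phi(q_X^{+,n})\leq\Phi(q_X)$, whereas closing the Fatou estimate requires a lower bound for $\Phi(q_X)$ in terms of discretizations, and at a jump of $q_X$ no refinement of the grid repairs this (as you correctly observe). The repair is not Krein--Smulian---that theorem is the engine of Delbaen's result \cite{D} (Fatou $\implies$ $\sigma(L^\infty,L^1)$ lower semicontinuity), which you already invoked for the ``consequently'' part; re-invoking it to prove the Fatou property itself is circular, and unnecessary anyway since the Fatou property only concerns norm-bounded sequences. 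The correct assembly is: (i) apply dilatation monotonicity to each $X_k$, giving $\rho(\E[X_k|\pi])\leq\rho(X_k)$; (ii) pass to the limit in $k$ using that $\E[X_k|\pi]\to\E[X|\pi]$ in sup-norm (finitely many cells, bounded a.s.\ convergence) together with norm continuity of $\rho$, obtaining $\rho(\E[X|\pi])\leq\liminf_k\rho(X_k)$; (iii) crucially, choose the partition $\pi$ adapted to the \emph{level sets of the limit} $X$ (cells on which $X$ oscillates by at most $\varepsilon$), so that $\norm{X-\E[X|\pi]}_\infty\leq\varepsilon$ and norm continuity gives $\rho(X)\leq\rho(\E[X|\pi])+\omega(\varepsilon)$. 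Step (iii) is the device that bypasses the jump obstruction, and it is absent from your sketch; martingale convergence along dyadic partitions cannot replace it, because that again only gives bounded a.s.\ convergence. One last small repair: a finite-valued convex functional on $L^\infty$ is \emph{not} automatically norm continuous (discontinuous linear functionals are convex and finite); you need the decreasing property, via \cite{RS06}, as the paper does.
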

\noindent This theorem can be viewed as a result on automatic continuity since the Fatou property is a type of continuity property (order lower semicontinuity, to be precise). 
Automatic continuity has   long  been an interesting research topic and probably has its roots in the well-known fact that a real-valued convex function on an open interval is continuous. In infinite-dimensional spaces,  Birkhoff's Theorem states that a positive linear functional on a Banach lattice is norm continuous. It was later extended to the following celebrated theorem for real-valued convex functionals. 

\begin{theorem*}[Ruszczy\'{n}ski and Shapiro \cite{RS06}] A real-valued, convex, decreasing functional on a Banach lattice is norm continuous.
\end{theorem*}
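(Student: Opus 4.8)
The plan is to reduce the assertion to a single local-boundedness statement and then exploit the lattice structure to verify it. Write $\rho$ for the functional and $E$ for the Banach lattice; here \emph{decreasing} means $X\le Y\Rightarrow\rho(X)\ge\rho(Y)$. First I would invoke the standard fact from convex analysis that a convex function which is finite and bounded above on some open ball is automatically locally Lipschitz, hence continuous, on that ball. Since $\rho$ is real-valued, its effective domain is all of $E$, so continuity at a single point propagates to continuity on all of $E$. Thus the whole theorem reduces to showing that $\rho$ is bounded above on \emph{some} neighborhood of $0$.

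Next I would record the elementary monotonicity reduction: for any $X$ one has $X\ge-\abs{X}$, so $\rho(X)\le\rho(-\abs{X})$, while $\norm{\abs{X}}=\norm{X}$. Consequently $\sup_{\norm{X}\le\delta}\rho(X)\le\sup\{\rho(-Y):Y\ge 0,\ \norm{Y}\le\delta\}$, so it suffices to bound $\rho$ from above on the negative part of a small ball.

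The heart of the argument, and the step I expect to be the main obstacle, is proving this bound. I would argue by contradiction: if $\rho$ is bounded above on no neighborhood of $0$, then it is unbounded above on every ball $\{\norm{X}\le\delta\}$, and by the reduction above, for each $n$ there is $X_n$ with $\norm{X_n}\le 4^{-n}$ and $\rho(X_n)>n$; putting $Y_n=\abs{X_n}\ge 0$ gives $\norm{Y_n}\le 4^{-n}$ and $\rho(-Y_n)\ge\rho(X_n)>n$. The key device is to select these witnesses with \emph{small norm but large value simultaneously}, which is possible precisely because unboundedness holds on arbitrarily small balls, so that no rescaling is needed. Then $Z:=\sum_n Y_n$ converges in norm (as $\sum\norm{Y_n}<\infty$ and $E$ is complete), one has $Z\ge Y_k$ for every $k$, and monotonicity yields $\rho(-Z)\ge\rho(-Y_k)>k$ for all $k$, forcing $\rho(-Z)=+\infty$ and contradicting real-valuedness.

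I would emphasize why monotonicity is indispensable: a discontinuous linear functional is convex and real-valued yet not continuous, so any valid proof must use the order structure in an essential way. The naive attempt to reach the contradiction by rescaling fixed-norm witnesses fails, since convexity only bounds $\rho(-tY_n)$ from \emph{above} by roughly $t\,\rho(-Y_n)$; the successful construction sidesteps this by exploiting that unboundedness persists on small balls and dominating a single tail $-Y_k$ by the monotone sum $-Z$. Once the local bound is in hand, the convex-analytic first step completes the proof.
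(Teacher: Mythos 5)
Your proof is correct. Note that the paper itself does not prove this statement: it is quoted as background from Ruszczy\'{n}ski and Shapiro \cite{RS06} (their Proposition 3.1) and used as a black box elsewhere in the paper, so there is no internal proof to compare against. Your argument is essentially the standard one behind that result (and behind the extended Namioka--Klee theorem of Biagini and Frittelli \cite{BF}): reduce continuity of a real-valued convex function to boundedness above on some ball, then rule out unboundedness by choosing witnesses $X_n$ with $\norm{X_n}\leq 4^{-n}$ and $\rho(X_n)>n$, replacing them by $Y_n=\abs{X_n}$ via monotonicity, and using norm completeness plus closedness of the positive cone to form $Z=\sum_n Y_n\geq Y_k$, so that $\rho(-Z)\geq\rho(-Y_k)>k$ for every $k$, contradicting finiteness. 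All the individual steps you invoke (the convex-analysis local-boundedness criterion, $\norm{\abs{X}}=\norm{X}$, order inequalities passing to norm limits in a Banach lattice) are sound, and your remark that monotonicity must enter essentially --- since discontinuous real-valued convex (indeed linear) functionals exist --- is exactly the right sanity check.
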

\noindent This result has drawn extensive attention in optimization, operations research and risk management in the past decade. We remark  that Fatou property and norm continuity do not imply each other but Fatou property is stronger than norm lower semicontinuity.
We refer the reader to Biagini and Frittelli  \cite{BF}, Farkas et al \cite{FarkasKochMunari2014} and Munari \cite{Munari} for further results on automatic norm continuity properties of risk measures.

\medskip

The present paper aims at investigating automatic Fatou property  of law-invariant risk measures on general model spaces. In recent years, study of risk measures has been extended from Lebesgue spaces to more general settings such as Orlicz spaces (see, e.g., Biagini and Frittelli \cite{BF}, Cheridito and Li \cite{CheriditoLi2008,CheriditoLi2009}, Kr\"{a}tschmer et al \cite{KSZ}, Gao et al \cite{GaoLeungMunariXanthos2017}) and other more general spaces (see, e.g., Bellini \cite{BC,BCb}, Chen et al \cite{CGX}, Drapeau and Kupper \cite{DK}, Farkas et al \cite{FarkasKochMunari2014}, and Frittelli and Rosazza Gianin \cite{FR}). In this paper, we   adopt general rearrangement-invariant function spaces  as our model space (see the precise definition in the ``Notation and Facts" subsection below). The reason is two-fold. Firstly, among all rearrangement-invariant spaces, we characterize precisely  the spaces where the Fatou property automatically holds. This provides   genuine insights into exactly what makes the Fatou property automatic. Secondly, the proofs on general model spaces are not more complicated than those  on special model spaces such as Orlicz spaces. Quite on the contrary, working on special spaces often obscures the essential ingredients in the proofs with space-related technicalities.

The paper is structured as follows. Let $\cX$ be a rearrangement-invariant space   other than $L^\infty$ over a non-atomic probability space (we refer to the next subsection for the reasons for excluding $L^\infty$).
In Section 2, Example \ref{counterexam} shows that   there is always a real-valued law-invariant coherent risk measure on $\cX$ that fails the Fatou property at every $X\in \cX$ whose negative  tails do not have vanishing norm, i.e., $\lim_n\norm{X\one_{\{X\leq -n\}}}>0$.  Since   automatic Fatou property for a general law-invariant  coherent risk measure  cannot be expected over the whole space in general, we seek to determine the spaces where the automatic Fatou property has maximal validity.
The Main Theorem (Theorem \ref{main-thm}) characterizes the  spaces $\cX$ on which every real-valued, law-invariant, coherent risk measure is automatically  Fatou at all $X\in \cX$ whose negative tails have vanishing norm.  The property of $\cX$ required, which we call {\em Almost Order Continuous Equidistributional Average (AOCEA)}, asks that every nonnegative $X\in \cX$ possesses averages of equidistributed copies that are arbitrarily close to the order continuous part $\cX_a$ of $\cX$; see the subsection below for definition of $\cX_a$.
A thorough analysis of the AOCEA property is carried out in Section 2.  The results are then applied to prove the Main Theorem in Section 3.  Since for a  general law-invariant risk measure,  the  Fatou property is only valid  on a proper subset of $\cX$, topological lower semicontinuity and Fenchel-Moreau dual representation no longer follow directly from previously known results.
Nevertheless, in Section 4, under the AOCEA property, we recover the result that every convex, decreasing, law-invariant functional $\rho:\cX\to \R$ is $\sigma(\cX,\cX')$ lower semicontinuous at any $X\in \cX$ whose negative  tails  have vanishing norm (Theorem \ref{autocont}); see the subsection below for definition of $\cX'$.  Furthermore, the dual representation formula is valid at such $X$'s (Theorem \ref{dualrep}).

\subsection{Notation and Facts}
Throughout the paper, $(\Omega,\mathcal{F},\mathbb{P})$ stands for a non-atomic probability space. It is a standard fact  that  there exists a random vector on $\Omega$ with  any prescribed joint distribution. In particular, if $\mathbf{X}'$ is a random vector on a non-atomic probability space $(\Omega',\mathcal{F}',\mathbb{P}')$, then there exists a random vector $\mathbf{X}$ on $(\Omega,\mathcal{F},\mathbb{P})$ that has the same distribution as $\mathbf{X}'$, which we write as $\mathbf{X}\sim \mathbf{X}'$. These notions and facts   extend in a plain manner to finite measure spaces of the same measure. In particular, we will often consider a set $A\in\mathcal{F}$ endowed with the   probability structure restricted from $(\Omega,\mathcal{F},\mathbb{P})$. Given a (measurable) partition $\pi=(A_i)_{i\in I}$ of $\Omega$, where $I$ is at most countable, we often define random variables on $\Omega$ by specifying its values on each $A_i$ and then gluing the pieces together. A frequently used fact is as follows. Let $\pi'=(A_i')_{i\in I}$  be a partition of $\Omega'$  for another probability space $(\Omega',\mathcal{F}',\mathbb{P}')$, such that $\bP(A_i)=\bP'(A_i')$ for each $i\in I$. If $\mathbf{X}$ and $\mathbf{X}'$ are random vectors on $\Omega$ and $\Omega'$, respectively, such that $\mathbf{X}\vert{A_i}\sim \mathbf{X}'\vert{A_i'}$ for any $i\in I$, then $\mathbf{X}\sim \mathbf{X}'$.

\smallskip

Let $L^0:=L^0(\Omega)$ be the space of all random variables on $(\Omega,\mathcal{F},\mathbb{P})$  (to be precise, $L^0$ consists of equivalent classes of random variables modulo a.s.\ equality). Throughout the paper, $\cX$ stands for a {\em rearrangement-invariant (abbr., {r.i.}) space} over $(\Omega,\mathcal{F},\mathbb{P})$, other than $L^\infty$.  By an r.i.\ space, we mean that $\cX\neq\{0\}$ and $\cX$ is a Banach space of random variables in $L^0$ such that for any $X\in \cX$,
\begin{enumerate}
    \item if $Y\in L^0$ and $|Y|\leq |X|$ a.s., then $Y\in \cX$ and $\|Y\|\leq \|X\|$,
    \item if $Z\in L^0$  and $Z\sim X$, then $Z\in \cX$ and $\|Z\|=\|X\|$.
\end{enumerate}
The classical $L^p$-spaces  are all r.i.\ spaces. It is well known (see, e.g., \cite[Appendix]{CGX}) that $L^\infty\subset \cX\subset L^1$ and there exists a constant $C>0$ such that
\begin{align}\label{l1norm}
    \norm{X}_{L^1}\leq C\norm{X}\quad \text{ for every }X\in\cX.
\end{align}

Besides the definition,   two  notions about r.i.\ spaces are needed for the paper. Given an r.i.\ space $\cX$, its \emph{associate space} $\cX'$   is defined by
$$\cX'=\{Y\in L^0: \;\;\E[\abs{XY}]<\infty\;\text{ for any }X\in \cX\}.$$
$\cX'$ itself is also an r.i.\ space and it sits naturally as a closed subspace in the norm continuous dual $\cX^*$ of $\cX$. In the literature, dual representations with respect to $\cX'$ are regarded as ``tractable"  (cf., e.g., \cite{BC,BCb}). In the Banach lattice literature, $\cX'$ is just the order continuous dual $\cX_n^\sim$.
The \emph{order continuous part} $\cX_a$ of $\cX$ is defined by
\begin{align}\label{ocpdefn}
    \cX_a:=\Big\{X\in \cX:\;\;\lim_{\bP(A)\rightarrow0}\norm{X\mathbf{1}_A}=0\Big\}.
\end{align}
The members in $\cX_a$ are termed as order continuous in $\cX$, or as having absolutely continuous norm in some literature.  
 An r.i.\ space $\cX$ is said to be \emph{order continuous} if $\cX_a=\cX$, or equivalently, if $\cX'=\cX^*$ (\cite[Theorem 2.4.2]{MN}).
It is well known that $(L^p)'=L^q$ for any $ p,q\in[1, \infty]$ with $\frac{1}{p}+\frac{1}{q}=1$ and that $L^p (1\leq p<\infty)$ is order continuous but $L^\infty$ is not. In fact, $(L^\infty)_a=\{0\}$.

We exclude $L^\infty$ from our consideration for two main reasons. Firstly, the results for $L^\infty$ are already established in \cite{Joui06}. Secondly, the results for $L^\infty$ and for other r.i.\ spaces hold for dramatically  different reasons, which we briefly point out now.
When $\cX\neq L^\infty$, it is known that $$L^\infty\subset \overline{L^\infty}=\cX_a\subset \cX\subset L^1,$$ where the closure of $L^\infty$ is taken with respect to the norm of $\cX$.
Therefore by \eqref{ocpdefn}, for any $X\in L^\infty$, $\lim_{\bP(A)\rightarrow0}\norm{X\mathbf{1}_A}=0$. This fact, which clearly fails in $L^\infty$, will  serve as a primary tool in our developments. It is the essential technical difference between our model space $\cX $ and $L^\infty $. Finally, we remark  that since $\cX\neq L^\infty$,
\begin{align}\label{ocheart}
    X\in\cX_a \iff \lim_n\norm{X\mathbf{1}_{\{\abs{X}\geq n\}}}=0.
\end{align}

\smallskip

A function $\Phi:[0,\infty)\to[0,\infty)$ is called an Orlicz function if it is convex, increasing, non-constant, and $\Phi(0)=0$.   The Orlicz space $L^\Phi:=L^\Phi(\Omega)$ is the space of all $X\in L^0$ such that
\[
\norm{X}_{\Phi}:=\inf\left\{\lambda>0:\E\left[\Phi\left(\frac{\abs{X}}{\lambda}\right)\right]\leq 1 \right\}<\infty.
\]
$L^\Phi$ with the Luxemburg norm $\|\cdot\|_\Phi$ is an r.i.\ space. Furthermore, $(L^\Phi)'=L^{\Psi}$ if $\lim_{t\rightarrow\infty}\frac{\Phi(t)}{t}=\infty$ and $(L^\Phi)'=L^\infty$ otherwise, where $\Psi$ is the conjugate of $\Phi$ given by $$\Psi(s)=\sup\{st-\Phi(t):t\geq0\}\text{ for all }s\geq0.$$
The order continuous part of $L^\Phi$ is given by the space of all $X\in L^\Phi$ such that
\[
\E\left[\Phi\left(\frac{\abs{X}}{\lambda}\right)\right]<\infty\;\;\text{
for all }\lambda>0.\] This space is also called the Orlicz heart of $L^\Phi$ and is denoted by $H^\Phi$. It is known that $L^\Phi$ is order continuous (i.e., $L^\Phi=H^\Phi$) iff  $\Phi$ satisfies the $\Delta_2$-condition, i.e., there exist $C>0$ and $t_0>0$ such that
\[
\Phi(2t)\leq C\Phi(t),\quad \forall \ t>t_0.
\]

\smallskip

A functional $\rho:\cX\rightarrow(-\infty,\infty]$ is called a coherent risk measure if it is
\begin{enumerate}
    \item decreasing, i.e., $\rho(X_1)\leq \rho(X_2)$ whenever $X_1,X_2\in\cX$ satisfies $X_1\geq X_2$,
    \item subadditive, i.e., $\rho(X_1+X_2)\leq \rho(X_1)+\rho(X_2)$ for any $X_1,X_2\in\cX$,
    \item positive homogeneous, i.e., $\rho(\lambda X)=\lambda \rho(X)$ for any $X\in\cX$ and any real number $\lambda\geq 0$,
    \item cash invariant, i.e., $\rho(X+m\mathbf{1})=\rho(X)-m$ for any $X\in\cX$ and $m\in\R$.
\end{enumerate}
A functional $\rho:\cX\rightarrow(-\infty,\infty]$ is said to be convex if $\rho\big(\lambda X_1+(1-\lambda)X_2\big)\leq \lambda \rho(X_1)+(1-\lambda)\rho(X_2)$ for any $X_1,X_2\in\cX$ and $\lambda\in[0,1]$. $\rho$ is said to be law invariant if $\rho(X_1)=\rho(X_2)$ for any $X_1,X_2\in\cX$ with $X_1\sim X_2$.

Given any $X\in\cX$, a functional $\rho:\cX\rightarrow(-\infty,\infty]$
is said to have the Fatou property at $X$ if $\rho(X)\leq \liminf_n\rho(X_n)$ for any sequence $(X_n)$ that order converges to $X$ in $\cX$. By \emph{order convergence} in $\cX$, we mean that $X_n\stackrel{a.s.}{\longrightarrow}X$ and there exists $X_0\in\cX$ such that $\abs{X_n}\leq X_0$ for all $n\in\N$. $\rho$ is said to be $\sigma(\cX,\cX')$ lower semicontinuous at $X$ if   $\rho(X)\leq \liminf_\alpha\rho(X_\alpha)$ for any net $(X_\alpha)$ that converges to $X$ in $\sigma(\cX,\cX')$, or equivalently, if $\{\rho>\lambda\}$ is a $\sigma(\cX,\cX')$-neighborhood of $X$ for any real number $\lambda$ such that $\rho(X)>\lambda$.

 \smallskip

Finally, the convex hull of a set $\mathcal{A}\subset \cX$ is denoted by $\mathrm{co}(\mathcal{A})$. The distance of $X\in\cX$ and $\mathcal{A}\subset \cX$ is given by $\mathrm{d}(X,\mathcal{A})=\inf_{Y\in\mathcal{A}}\norm{X-Y}$; the distance of $\mathcal{A},\mathcal{B}\subset \cX$ is given by $\mathrm{d}(\mathcal{A},\mathcal{B})=\inf_{X\in\mathcal{A},Y\in\mathcal{B}}\norm{X-Y}$.
The positive and negative parts of $X\in\cX$ are given by $X^+:=\max\{X,0\}$ and $X^-:=\max\{-X,0\}$, respectively.

 \smallskip

We refer the reader  to \cite{BS88} for facts and results on general r.i.\ spaces, to \cite{EdgarSucheston1992} for detailed information on Orlicz spaces, and to \cite{AB,MN} for relevant terminology and facts on Banach lattices and order structures.

\section{The Main Result. Automatic Fatou Property}
\subsection{Formulation of the Main Result}
We begin with  the following example, which indicates that for  the class of real-valued, law-invariant,  coherent risk measures,  automatic Fatou property cannot be expected at any random variable $X\in\cX$ such that $X^-\not\in \cX_a$. Note that $X^-\notin\cX_a\iff \lim_n\norm{X\one_{\{X\leq -n\}}}>0$.
\begin{example}\label{counterexam}
Let $\cX$ be an r.i.\ space over a non-atomic probability space such that $\cX\not= L^\infty$.
Consider the functional $\rho:\cX\to \R$ given by
\begin{align}\label{firstexample}
\rho(X) = \mathrm{d}(X^-,\cX_a)-\E[X]=\inf_{Y\in\cX_a}\norm{X^--Y}-\E[X].
\end{align}
We show that $\rho$ is a law-invariant coherent risk measure that fails the Fatou property at any $X\in\cX$ with $X^- \notin \cX_a$.

\begin{enumerate}
    \item $\rho$ is a coherent risk measure.\\
    Since $\cX_a$ is a norm-closed subspace in $\cX$, the quotient space $\cX/\cX_a$ is a Banach space with the quotient norm $\norm{[X]}_q:=\mathrm{d}(X,\cX_a)$, where $[X]$ is the equivalent class of $X\in\cX$ in $\cX/\cX_a$; see, e.g., \cite[Theorem 1.11]{AA06}. Moreover, since $\cX_a$ is an order ideal in $\cX$ (i.e., if $\abs{Y}\leq \abs{X}$ and $X\in\cX_a$ then $Y\in\cX_a$), $\cX/\cX_a$ with the quotient norm $\norm{\cdot}_q$ and the quotient order $[X]\vee [Y]:=[X\vee Y]$ is in fact a Banach lattice; see, e.g., \cite[p.\ 3]{LZ}. With these observations, we write 
    \begin{align}\label{simpleformd}
    \rho(X)=\norm{[X^-]}_q-\E[X].\end{align}
Using \eqref{simpleformd}, it is clear that $\rho$ is decreasing and positive homogeneous. Let's show subadditivity of $\rho$. Take any $X_1,X_2\in\cX$. Since $0\leq (X_1+X_2)^-\leq X_1^-+X_2^-$, $$[0]\leq [(X_1+X_2)^-]\leq [X_1^-+X_2^-]=[X_1^-]+[X_2^-]$$ in the quotient space $\cX/\cX_a$. Therefore, $$\bignorm{[(X_1+X_2)^-]}_q\leq \bignorm{[X_1^-]+[X_2^-]}_q\leq \bignorm{[X_1^-]}_q +\bignorm{[X_2^-]}_q.$$
From this it follows easily that $\rho(X_1+X_2)\leq \rho(X_1)+\rho(X_2)$.  To complete
the proof that $\rho$ is a coherent risk measure, it remains to show cash invariance
of $\rho$. Take any $X\in\cX$ and $m\in\R$. Then $(X+m\one)^--X^-\in L^\infty\subset \cX_a$. Hence    $$\bignorm{[(X+m\one)^-]}_q=\bignorm{[X^-]}_q.$$
It follows easily from \eqref{simpleformd} that   $\rho(X+m\one)=\rho(X)-m$.
\item $\rho$ is law  invariant. \\
For a real number $r\geq 0$, set $h(x)=x^--x^-\wedge r$. If $X'\sim X$, then $h(X')\sim h(X)$ and $\norm{h(X')}=\norm{h(X)}$. Thus for law invariance of $\rho$, it is enough to show that for any $X\in\cX$,
\begin{align}\label{li}
\rho(X)=\inf_{r\in\R,r\geq 0}\norm{X^--X^-\wedge r\mathbf{1} }-\E[X].
\end{align}
Denote the   right hand side of \eqref{li} by $\widetilde{\rho}(X)$.  Since $0\leq X^-\wedge r\mathbf{1} \in L^\infty\subset \cX_a$, by comparing  the infima in \eqref{firstexample} and \eqref{li}, it is immediate that $\rho(X)\leq \widetilde{\rho}(X)$.
On the other hand, recall that $L^\infty $ is norm dense in   $\cX_a$. Thus it is easy to see from \eqref{firstexample} that for any $X\in \cX$, $$\rho(X) = \mathrm{d}(X^-,L^\infty)-\E[X]=\inf_{Y\in L^\infty}\norm{X^--Y}-\E[X].  $$
For any $Y\in L^\infty$, since $Y\leq \norm{Y}_\infty \mathbf{1}$, $\abs{X^--Y}\geq (X^--Y)^+\geq (X^--\norm{Y}_\infty \mathbf{1})^+=X^--X^-\wedge (\norm{Y}_\infty \mathbf{1})$. Thus \begin{align*}
    \norm{X^--Y}-\E[X]\geq   \bignorm{X^--X^-\wedge (\norm{Y}_\infty \mathbf{1})}-\E[X]\geq \widetilde{\rho}(X).
\end{align*}
Taking infimum over $Y\in L^\infty$ yields $\rho(X)\geq \widetilde{\rho}(X)$. It follows that $\rho(X)=\widetilde{\rho}(X)$.
\item $\rho$ fails the Fatou property at every $X\in\cX$ with $X^-\notin\cX_a$.\\
Let $X\in\cX$ be such that $X^-\not\in \cX_a$. Set $X_n=(X\vee (-n\mathbf{1}))\wedge n\mathbf{1}$ for $n\in\N$. Then $ X_n\stackrel{a.s.}{\longrightarrow}X$ and $\abs{X_n}\leq \abs{X}$ for any $n\in\N$. It suffices to  show that $\rho(X)>\lim_n\rho(X_n)$. Since $X_n\in L^\infty \subset \cX_a$,  $\mathrm{d}(X_n,\cX_a)=0$. By Dominated Convergence Theorem, $$\rho(X_n)=\mathrm{d}(X_n,\cX_a)-\E[X_n]=-\E[X_n]\rightarrow -\E[X].$$
However, as $X^-\not\in \cX_a$, $\rho(X)=\mathrm{d}(X^-,\cX_a)-\E[X]>-\E[X]$. Therefore, $\rho(X)>\lim_n\rho(X_n)$, as required.
\end{enumerate}
\end{example}

Example \ref{counterexam} tells us that, in looking for random variables $X$ in an r.i.\ space at which all real-valued, law-invariant, coherent risk measures automatically satisfies the Fatou property, one must confine the search to those $X$'s with $X^- \in \cX_a$.
Remarkably, in most classical r.i.\ spaces, all real-valued, law-invariant, coherent risk measures are indeed automatically Fatou at all such $X$.    This is the case, for instance, in all $L^p$ spaces, $1\leq p\leq \infty$, all Orlicz spaces and Orlicz hearts and all order continuous r.i.\ spaces. In fact,  the precise
structural property on an r.i.\ space can be identified in order for this to happen. Let $\cX$ be an r.i.\ space over a non-atomic probability space other than $L^\infty$. For $X\in \cX$, define
\[ \mathcal{CL}(X) = \mathrm{co}\{Y:Y \sim X\}.\]
We say that $\cX$ has the \textbf{Almost Order Continuous Equidistributional Average (abbr., AOCEA) property} 
if for any $X\in \cX_+$, $$\mathrm{d}(\mathcal{CL}(X),\cX_a) =0.$$  Note that in the definition of the AOCEA property, one may replace the set $\mathcal{CL}(X)$ with 
the set 
\begin{align}\label{alset}
    \mathcal{AL}(X) = \Big\{\frac{1}{n}\sum^n_{k=1}X_k: n\in \N, X_1,\dots, X_n \sim X\Big\}.
\end{align} 
This follows from the observation that 
 the set of convex combinations with rational coefficients of elements of the set $\{Y: Y \sim X\}$ is norm dense in $\mathcal{CL}(X)$ and hence so is the set $\mathcal{AL}(X)$. 
Therefore, the AOCEA property says that every   nonnegative random variable in $\cX$ possesses averages of equidistributed copies that are almost order continuous, i.e., arbitrarily close to $\cX_a$.

We can now state the   main result of the paper.
\begin{theorem}\label{main-thm}
Let $\cX$ be an r.i.\ space over a non-atomic probability space $(\Omega,\mathcal{F},\bP)$ other than $L^\infty$.
The following statements are equivalent:
\begin{enumerate}
       \item  Every law-invariant, coherent risk measure  $\rho:\cX\to \R$ has the Fatou property at $0$.
   \item Every convex, decreasing, law invariant functional  $\rho:\cX\to \R$ has the Fatou property at any $X\in \cX$ such that $X^-\in \cX_a$.
    \item $\cX$ satisfies the AOCEA property.
    \end{enumerate}
\end{theorem}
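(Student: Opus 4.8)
The plan is to establish the cycle $(2)\Rightarrow(1)\Rightarrow(3)\Rightarrow(2)$. The implication $(2)\Rightarrow(1)$ is immediate: a coherent risk measure is convex, decreasing and law invariant, and since $0^-=0\in\cX_a$, statement $(2)$ applied at $X=0$ gives the Fatou property at $0$.

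For $(1)\Rightarrow(3)$ I would argue by contraposition. Suppose the AOCEA property fails, so there is $X_0\in\cX_+$ with $c:=\mathrm{d}(\mathcal{CL}(X_0),\cX_a)>0$. As $\mathcal{CL}(X_0)$ is convex and $\cX_a$ is a closed subspace at distance $c$ from it, Hahn--Banach separation yields $\phi\in\cX^*$ with $\|\phi\|\le1$, $\phi|_{\cX_a}=0$ and $\phi\ge c$ on $\mathcal{CL}(X_0)$. Since $\cX_a$ is an ideal, its annihilator is a sublattice of $\cX^*$, so $\psi:=\phi^+$ still satisfies $\psi\ge0$, $\psi|_{\cX_a}=0$, $\|\psi\|\le1$ and $\psi\ge\phi\ge c$ on the positive set $\mathcal{CL}(X_0)$. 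I then set
$$\mathcal{S}=\{\eta\in\cX^*:\eta\ge0,\ \eta|_{\cX_a}=0,\ \|\eta\|\le1,\ \eta(Z)\ge c\text{ for all }Z\sim X_0\},$$
which contains $\psi$, and define $\rho(X)=-\E[X]+\sup_{\eta\in\mathcal{S}}\eta(-X)$. Each defining condition of $\mathcal{S}$ is invariant under measure-preserving rearrangements, so $\rho$ is law invariant, and since every $\eta\in\mathcal{S}$ is positive with $\eta(\mathbf1)=0$ (as $\mathbf1\in\cX_a$), $\rho$ is a real-valued coherent risk measure. Finally, for $Y_n:=(X_0-n)^+\ge0$ one has $Y_n\downarrow0$ a.s.\ with $Y_n\le X_0\in\cX$, while for every $\eta\in\mathcal{S}$ the identity $\eta(X_0)=\eta(X_0\wedge n)+\eta(Y_n)=\eta(Y_n)$ (the truncation lies in $\cX_a$) forces $\eta(Y_n)=\eta(X_0)\ge c$. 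Hence $\rho(Y_n)\le-\E[Y_n]-c\to-c<0=\rho(0)$, so $\rho$ fails the Fatou property at $0$, contradicting $(1)$.

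For the substantive implication $(3)\Rightarrow(2)$ I would first invoke the Ruszczy\'{n}ski--Shapiro theorem to get that $\rho$ is norm continuous, hence locally Lipschitz. Passing to $U_n=\sup_{k\ge n}Y_k\downarrow X$, monotonicity of $\rho$ reduces the Fatou inequality at $X$ to proving $\rho(X+E_n)\to\rho(X)$ for any $E_n\downarrow0$ a.s.\ with $0\le E_n\le V\in\cX_+$. Truncating $E_n=(E_n\wedge m)+(E_n-m)^+$, the bounded part tends to $0$ in norm (dominated convergence inside the order-continuous space $\cX_a$), so by local Lipschitzness it suffices to control the tail $b_n=(E_n-m)^+\le V_m:=(V-m)^+$, up to an error $\gamma(m)\to0$ as $m\to\infty$. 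The core input is the averaging (dilatation) monotonicity coming from convexity and law invariance: for copies $Z_1,\dots,Z_N\sim Z$ one has $\rho(\frac1N\sum_k Z_k)\le\rho(Z)$, so $\rho\le\rho(Z)$ throughout $\mathcal{CL}(Z)$. I would then combine this with the AOCEA property (as analysed in Section 2), which allows equidistributed averages of $V_m$ and of $X^+$ to be replaced by elements arbitrarily close to $\cX_a$, and with the hypothesis $X^-\in\cX_a$ (so that equidistributed averages of $X^-$ remain in $\cX_a$), thereby transferring the estimate onto $\cX_a$, where norm continuity of $\rho$ closes the argument.

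The hard part will be the interaction between the fixed ``anchor'' $X$ and the tail $b_n$: law invariance naturally compares $\rho(X+b_n)$ with values of $\rho$ at averages of \emph{copies} of the pair, which perturbs $X$ itself, whereas the a.s.\ decay of $b_n$ wants the copies spread over disjoint small sets, untying them from $X$. Reconciling these is exactly where both $X^-\in\cX_a$ and the AOCEA property are indispensable: dually, together they preclude a positive singular functional that is bounded below on $\mathcal{CL}(X^+)$ --- the very object engineered in the proof of $(1)\Rightarrow(3)$ --- so that no singular part of the subdifferential of $\rho$ can bind at such an $X$, which is what permits the tail contribution to be absorbed into $\cX_a$ and hence to vanish in the limit.
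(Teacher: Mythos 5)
Your cycle $(2)\Rightarrow(1)\Rightarrow(3)\Rightarrow(2)$ matches the paper's architecture, and several ingredients are right: $(2)\Rightarrow(1)$ is indeed immediate; in $(1)\Rightarrow(3)$ your separation step and the passage to $\psi=\phi^+$ via the band structure of the annihilator of $\cX_a$ are correct (a legitimate alternative to the paper's device of separating $\mathcal{CL}(X_0)$ from $\mathrm{co}(\cX_a\cup(-\cX_+))$ to force positivity), and your test sequence $(X_0-n\mathbf{1})^+$ is essentially the paper's $X_0\mathbf{1}_{\{X_0\geq n\}}$. However, the law invariance of $\rho(X)=-\E[X]+\sup_{\eta\in\mathcal{S}}\eta(-X)$ is a genuine gap, and it is the crux of this implication. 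For the support function of $\mathcal{S}$ to be law invariant you need: for all $X_1\sim X_2$ and all $\eta\in\mathcal{S}$ there exists $\eta'\in\mathcal{S}$ with $\eta'(-X_2)\geq \eta(-X_1)$. Invariance of the conditions defining $\mathcal{S}$ under the composition operators $X\mapsto X\circ T$, with $T$ measure preserving, yields this only for pairs of the form $(X,X\circ T)$; but on a non-atomic probability space two equidistributed random variables need not be linked by any measure-preserving transformation, and norm density of the orbit $\{X\circ T\}$ in the equidistribution class of $X$, in the norm of a general r.i.\ space, is neither obvious nor available here. This is exactly why the paper makes the opposite trade-off: it defines $\phi(X)=\sup\{f(X'):X'\sim X\}$, which is law invariant by construction, and then must do real work to prove monotonicity, subadditivity and translation invariance (Lemmas \ref{discretization} and \ref{p3.4}, resting on Lemma \ref{movingorder}). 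In your construction coherence is trivial but law invariance is asserted, not proven.

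The implication $(3)\Rightarrow(2)$ is also incomplete. Your reductions are sound: norm continuity from \cite{RS06}, replacing $(Y_k)$ by $U_n=\sup_{k\geq n}Y_k\downarrow X$ (a clean alternative to the paper's Lemma \ref{techlem2}), and disposal of the bounded part $E_n\wedge m$ inside $\cX_a$. But the remaining step --- a lower bound of the form $\liminf_n\rho(X+b_n)\geq\rho(X)-\gamma(m)$ for the tails $b_n=(E_n-m\mathbf{1})^+$ --- is precisely what you yourself label ``the hard part,'' and the closing duality heuristic (that AOCEA together with $X^-\in\cX_a$ ``precludes a singular functional binding at $X$'') is not an argument; making it rigorous would essentially require the lower semicontinuity and dual representation machinery that the paper can only establish \emph{after} Theorem \ref{main-thm}. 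The paper's resolution is constructive and is where AOCEA genuinely enters: by Proposition \ref{MainLemma}(4) the copies $Z_n\sim Y\mathbf{1}_{A_n}$ may be taken supported in one fixed set $B\backslash A_1$ of small measure on which $\abs{X}\leq r$; one then builds $Y_n'\sim Y_n$ by a piecewise rearrangement so that $\sum_n\lambda_nY_n'=X-\sum_n\lambda_nX^+\mathbf{1}_{A_n}+V$ with $\norm{V}\leq\delta$, whence monotonicity, \eqref{normcon}, convexity and law invariance produce the contradiction. Nothing in your sketch supplies this construction, so as written the proposal establishes neither $(1)\Rightarrow(3)$ nor $(3)\Rightarrow(2)$.
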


A detailed analysis of the AOCEA property is given in the next subsection.  Proof of Theorem \ref{main-thm} will be presented in Section 3.

\subsection{An in-depth look at  the AOCEA property}\label{condition2}
In this part, we   provide a detailed investigation of the AOCEA property.  The outcome of the investigation will also lend significant aid to the proof of Theorem 2.2 itself.
The main aspects of the property are revealed in the following proposition.
Given a sequence $(A_n)_{n=1}^\infty$ of measurable sets, we write $A_n\downarrow \emptyset$ if $A_n \supseteq A_{n+1}$ for all $n\geq 1$ and $\bigcap_{n=1}^\infty A_n = \emptyset$.

\begin{proposition}
\label{MainLemma}
Let $\cX$ be an r.i.\ space over a non-atomic probability space $(\Omega,\mathcal{F},\bP)$ other than $L^\infty$. The following statements are equivalent:
 \begin{enumerate}
 \item $\cX$ satisfies the AOCEA property, i.e., for any $X\in \cX_+ $,
\[ \mathrm{d}(\mathcal{CL}(X),\cX_a) =0, \text{ where } \mathcal{CL}(X) = \mathrm{co}\{Y:Y \sim X\}.\]
\item For  any $A\in\mathcal{F}$ with $\bP(A)>0$, any $X\in \cX$ supported in $A$,  and any $\varepsilon>0$, there exist random variables $(X_i)_{i=1}^k$, all supported in $A$, with $X_i\sim X$ for $i=1,\dots,k$, a convex combination $\sum_{i=1}^k\lambda_iX_i$, and $V\in\cX_a$, also supported in $A$, such that $$\Bignorm{\sum_{i=1}^k\lambda_iX_i-V}<\varepsilon.$$
\item For any $X \in \cX_+$,    any sequence of measurable sets $(A_n)_{n=1}^\infty$ with $A_n \downarrow \emptyset$, and any $\varepsilon >0$, there exist $n_1,\dots,n_k\in\N$, random variables $(Z_i)_{i=1}^k$ and a convex combination $Z=\sum_{i=1}^k\lambda_i Z_i$ such that   $Z_i \sim X\mathbf{1}_{A_{n_i}}$    for $i=1,\dots,k$ and $\norm{Z} <\varepsilon$.
\item  For any $X \in \cX$, any $A\in\mathcal{F}$ with $\bP(A)>0$,  any sequence of measurable sets $(A_n)_{n=1}^\infty$ with $A_n \downarrow \emptyset$, and any $\varepsilon >0$, there exist $n_1,\dots,n_k\in\N$, random variables $(Z_i)_{i=1}^k$ and a convex combination $Z=\sum_{i=1}^k\lambda_i Z_i$ such that  $Z_i \sim X\mathbf{1}_{A_{n_i}}$    for $i=1,\dots,k$, all $Z_i$'s are supported in $A$, and $\norm{Z} <\varepsilon$.
\end{enumerate}
\end{proposition}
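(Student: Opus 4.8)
The plan is to establish the cycle $(1)\Rightarrow(3)\Rightarrow(4)\Rightarrow(2)\Rightarrow(1)$. The implication $(2)\Rightarrow(1)$ is immediate: taking $A=\Omega$ and $X\in\cX_+$ in (2) produces a convex combination $\sum_i\lambda_iX_i$ of copies of $X$ --- hence an element of $\mathcal{CL}(X)$ --- within $\varepsilon$ of $\cX_a$, so $\mathrm{d}(\mathcal{CL}(X),\cX_a)=0$. The engine behind the remaining implications is the combination of two facts that hold because $\cX\neq L^\infty$: every $V\in\cX_a$ satisfies $\norm{V\one_S}\to0$ as $\bP(S)\to0$, while each tail $X\one_{A_n}$ is supported on a set whose measure tends to $0$. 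This is what lets one convert the qualitative ``closeness to $\cX_a$'' delivered by the AOCEA property into the quantitative ``small norm'' demanded by (3) and (4).

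For $(1)\Rightarrow(3)$, fix $X\in\cX_+$, $A_n\downarrow\emptyset$ and $\varepsilon>0$. By the AOCEA property I would first choose a convex combination $W=\sum_{j=1}^k\mu_jY_j$ with $Y_j\sim X$ and a $V\in\cX_a$ with $\norm{W-V}<\varepsilon/2$, then fix $\delta>0$ with $\norm{V\one_S}<\varepsilon/2$ whenever $\bP(S)<\delta$, and finally an index $n$ with $k\bP(A_n)<\delta$. The decisive move is to slip a genuine copy of the tail beneath each $Y_j$: since $\bP(X\one_{A_n}\in E)\le\bP(X\in E)=\bP(Y_j\in E)$ for every Borel $E\subseteq(0,\infty)$, the law of $X\one_{A_n}$ on $(0,\infty)$ is dominated by that of $Y_j$, so (thinning $Y_j$ on the non-atomic space via a Radon--Nikodym density in $[0,1]$) one can select $B_j$ with $Y_j\one_{B_j}\sim X\one_{A_n}$. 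Put $Z_j=Y_j\one_{B_j}$ and $Z=\sum_j\mu_jZ_j$. Then $Z$ is a convex combination of copies of $X\one_{A_n}$, one has $0\le Z\le W$, and $Z$ is supported on $S=\bigcup_j(\{Y_j\ne0\}\cap B_j)$ with $\bP(S)\le k\,\bP(A_n)<\delta$; hence $\norm Z\le\norm{W\one_S}\le\norm{V\one_S}+\norm{W-V}<\varepsilon$, which is exactly (3).

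For $(3)\Rightarrow(4)$ I would add the two features of (4) separately. To pass from $\cX_+$ to signed $X$, apply (3) to $\abs X$, obtaining copies $W_i\sim\abs X\one_{A_{n_i}}=\abs{X\one_{A_{n_i}}}$ whose convex combination has norm $<\varepsilon$, and then, splitting the level sets of each $W_i$ in the proportions dictated by the conditional signs of $X\one_{A_{n_i}}$, attach signs to form $Z_i\sim X\one_{A_{n_i}}$ with $\abs{Z_i}=W_i$; the lattice estimate $\abs{\sum_i\lambda_iZ_i}\le\sum_i\lambda_iW_i$ keeps the norm below $\varepsilon$. To force the supports into $A$, choose the $n_i$ so large that $\sum_i\bP(A_{n_i})\le\bP(A)$; then $(Z_1,\dots,Z_k)$ is jointly nonzero only on a set of measure $\le\bP(A)$ and may be replaced by a joint-distribution-preserving copy supported in $A$, which leaves $\norm{\sum_i\lambda_iZ_i}$ unchanged. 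Finally $(4)\Rightarrow(2)$ is a reassembly: for $X$ supported in $A$, apply (4) with $A_n=\{\abs X\ge n\}$ to get $Z_i\sim X\one_{\{\abs X\ge n_i\}}$ supported in $A$ with $\norm{\sum_i\lambda_iZ_i}<\varepsilon$, then glue onto each $Z_i$, on a disjoint part of $A$, a copy of the bounded remainder $X\one_{\{\abs X<n_i\}}$ to build $X_i\sim X$ supported in $A$ (there is room since $\bP(X\ne0)\le\bP(A)$); the gluing fact yields $X_i\sim X$, the bounded remainders assemble into some $V\in L^\infty\subseteq\cX_a$ supported in $A$, and $\norm{\sum_i\lambda_iX_i-V}=\norm{\sum_i\lambda_iZ_i}<\varepsilon$.

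The main obstacle is conceptual rather than computational, and it lies in $(1)\Rightarrow(3)$: the AOCEA property only asserts proximity to the (generally large) subspace $\cX_a$, and the point is to recognize that this proximity becomes genuine norm-smallness once one restricts attention to the vanishing-measure support of a far-out tail. Making this precise forces the construction of an exact distributional copy $Y_j\one_{B_j}$ of the tail lying underneath each equidistributed copy $Y_j$, together with the domination $Z\le W$; the analogous sign- and support-couplings in $(3)\Rightarrow(4)$ and the measure bookkeeping in the gluing of $(4)\Rightarrow(2)$ are of the same nature. All of these couplings proceed by selecting or splitting level sets in prescribed proportions, which is legitimate on an arbitrary non-atomic probability space precisely because level sets of a measurable function there remain non-atomic; ensuring that each distributional identity holds exactly, rather than merely approximately, is where the care is needed.
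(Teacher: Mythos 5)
Your overall architecture (the cycle $(2)\Rightarrow(1)$, $(1)\Rightarrow(3)\Rightarrow(4)\Rightarrow(2)$) is the same as the paper's, and your $(4)\Rightarrow(2)$ and $(2)\Rightarrow(1)$ arguments are essentially the paper's own. But there is a genuine gap, and it occurs twice: the coupling principle you invoke is false. In $(1)\Rightarrow(3)$ you claim that for the \emph{given} copy $Y_j$ one can select a set $B_j$ with $Y_j\one_{B_j}\sim X\one_{A_n}$, and in $(3)\Rightarrow(4)$ that one can attach signs to the \emph{given} $W_i$ to get $Z_i\sim X\one_{A_{n_i}}$ with $\abs{Z_i}=W_i$. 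Neither is possible in general. Take $\Omega=[0,1]$ with Lebesgue measure $\lambda$ and $Y=\mathrm{id}$: a thinning with density $\tfrac12$ would require a Borel set $B$ with $\lambda(B\cap E)=\tfrac12\lambda(E)$ for every Borel $E$, which is impossible ($E=B$ forces $\lambda(B)=0$, while $E=[0,1]$ forces $\lambda(B)=\tfrac12$). The same obstruction arises whenever the random variable handed to you generates the whole $\sigma$-algebra mod null sets (such copies of $X$ exist when the law of $X$ has a continuous part, and nothing in the AOCEA property stops it from delivering exactly those); non-atomicity of the space does not rescue the step, because when the law is continuous the level sets are null and cannot be ``split in prescribed proportions'' --- the closing paragraph of your proposal is precisely the false claim. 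The sign-attachment gap is not even repairable by weakening to $\abs{Z_i}\leq W_i$: since $\abs{Z_i}$ and $W_i$ would have the same law and both lie in $L^1$, $\abs{Z_i}\leq W_i$ forces $\abs{Z_i}=W_i$ a.s., returning you to the impossible case.

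The correct tool, and the paper's, is one-sided domination by a \emph{newly constructed} random variable rather than in-place modification of the given one: Lemma~\ref{movingorder} (a quantile coupling) produces, from $Y_j\sim X\geq X\one_{A_n}$, some $Z_j$ with $0\leq Z_j\leq Y_j$ and $Z_j\sim X\one_{A_n}$ --- not of the form $Y_j\one_{B_j}$, but that is irrelevant to your estimate. With this substitution your $(1)\Rightarrow(3)$ goes through verbatim: $\bP(Z_j\neq0)\leq\bP(A_n)$, so $Z=\sum_j\mu_jZ_j$ satisfies $0\leq Z\leq W\one_S$ with $\bP(S)\leq k\bP(A_n)<\delta$, whence $\norm{Z}\leq\norm{W-V}+\norm{V\one_S}<\varepsilon$. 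Notably, this repaired argument is \emph{simpler} than the paper's own proof of $(1)\Rightarrow(3)$, which pairs the $i$-th copy with the $i$-th tail set, runs a block construction, and invokes the Hardy--Littlewood inequality, the duality $(\cX_a)^*=\cX'$, and Mazur's theorem; your idea of choosing the tail index \emph{after} the AOCEA data (so that $k\bP(A_n)$ is small relative to the modulus of $V\in\cX_a$) bypasses all of that machinery. For $(3)\Rightarrow(4)$, however, the repair requires restructuring along the paper's lines: split $A$ into disjoint halves $B$ and $C$, move a \emph{joint} copy of the variables from (3) into $B$ (joint, so the norm of the convex combination is preserved), dominate it via Lemma~\ref{movingorder} to obtain $Q_i\sim X^+\one_{A_{n_i}}$ supported in $B$, do the same in $C$ to obtain $R_i\sim X^-\one_{A_{n_i}}$, and set $Z_i=Q_i-R_i$; disjointness of the supports is what replaces your sign-flipping and makes $Z_i\sim X\one_{A_{n_i}}$ hold exactly.
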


(2) is the ``localized" version of (1). In particular, taking $A=\Omega$ in $(2)$ yields (1) for all random variables, not necessarily nonnegative. Similarly, one can compare (3) and (4). The four equivalent formulations are each useful in their own way.
(1) is succinct
and aesthetically pleasing;  (3) is easier to verify in practice.  (2) will be used for establishing automatic $\sigma(\cX,\cX')$ lower semicontinuity in Subsection \ref{autocontinuity};  (4) will be used in the proof of Theorem \ref{main-thm} (2)$\implies$(1) in Subsection \ref{2to1}.

The proof of this proposition is, however, rather involved. We put it into Appendix \ref{proofAOCEA} in order to facilitate the accessibility of the main results of the paper on automatic Fatou property and tractable dual representations of law-invariant risk measures.

\medskip

The AOCEA property is satisfied by most classical r.i.\ spaces. 
First of all, it trivially holds for an order continuous r.i.\ space $\cX$  because  $\cX = \cX_a$.  Therefore, Lebesgue spaces $L^p $ $(1\leq p<\infty)$ 
and Orlicz hearts all satisfy the property.
The next proposition shows that Orlicz spaces, which have been widely used as model spaces in the recent literature, also satisfy the property.  

\begin{proposition}\label{orlicznice}
An Orlicz space $L^\Phi$ has the AOCEA property.
\end{proposition}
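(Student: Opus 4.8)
The plan is to verify condition (3) of Proposition \ref{MainLemma}, which is equivalent to the AOCEA property and is the most convenient to check in a concrete space. Recall that for the Orlicz space $\cX = L^\Phi$ the order continuous part is the Orlicz heart, $\cX_a = H^\Phi$. Fix $X \in L^\Phi_+$, a sequence $(A_n)$ with $A_n \downarrow \emptyset$, and $\varepsilon > 0$; the goal is to produce a convex combination of equidistributed copies of the $X\one_{A_n}$'s with norm below $\varepsilon$. The starting point is to choose a scale $\lambda_1 \geq \norm{X}_\Phi$ at which $\E[\Phi(X/\lambda_1)] \leq 1$, so that $\Phi(X/\lambda_1) \in L^1$. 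Writing $\delta_n = \bP(A_n)$ and $\alpha_n = \E\big[\Phi(X/\lambda_1)\one_{A_n}\big]$, I note that $\delta_n \to 0$ by continuity of the measure and $\alpha_n \to 0$ by the Dominated Convergence Theorem, since $\Phi(X/\lambda_1)\one_{A_n}\downarrow 0$ a.s.\ and is dominated by $\Phi(X/\lambda_1)\in L^1$.

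The core of the argument is a disjoint-copies averaging estimate. For a fixed (large) $n$, put $Y = X\one_{A_n}$ and $m = \big\lfloor \min(1/\alpha_n, 1/\delta_n)\big\rfloor$; this is a large integer once $n$ is large. Since $m\delta_n \leq 1$ and the space is non-atomic, I can place $m$ equidistributed copies $Z_1, \dots, Z_m \sim Y$ on pairwise disjoint supports and form the average $Z = \frac1m\sum_{i=1}^m Z_i$. Disjointness of the supports gives, for any $\lambda > 0$, the identity $\E[\Phi(\abs{Z}/\lambda)] = m\,\E[\Phi(Y/(m\lambda))]$, because on the support of each $Z_i$ one has $\abs{Z} = \abs{Z_i}/m$ while $Z$ vanishes off their union. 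Taking $\lambda = \lambda_1/m$ makes the inner scale equal to $\lambda_1$, so $\E[\Phi(\abs{Z}/\lambda)] = m\,\E[\Phi(Y/\lambda_1)] = m\alpha_n \leq 1$ by the choice of $m$. Hence $\norm{Z}_\Phi \leq \lambda = \lambda_1/m$.

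To finish, I would let $n \to \infty$: then $\alpha_n, \delta_n \to 0$ force $m \to \infty$, so $\norm{Z}_\Phi \leq \lambda_1/m \to 0$, and choosing $n$ large enough that $\lambda_1/m < \varepsilon$ produces the required convex combination (all indices $n_i$ equal to this $n$, all weights $1/m$). This establishes (3) and therefore the AOCEA property. The step I expect to be the main obstacle, and the reason the order-continuity shortcut does not cover everything, is the case $X \notin H^\Phi$, where $\Phi(X)$ need not be integrable and no fixed truncation controls the tails; the device that resolves it is to freeze the Orlicz scale at $\lambda_1$ (where integrability does hold) and to balance the two competing constraints on $m$, namely $m\delta_n \leq 1$ (so the disjoint copies fit) against $m\alpha_n \leq 1$ (so the norm stays controlled), both of which slacken simultaneously as $n \to \infty$.
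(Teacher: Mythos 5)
Your proof is correct and takes essentially the same route as the paper: both verify Proposition~\ref{MainLemma}(3) by placing equidistributed copies of tails of $X$ on pairwise disjoint supports and then exploiting additivity of the Orlicz modular over disjoint supports together with the scaling in the Luxemburg norm. The only difference is bookkeeping: the paper averages single copies of $k$ \emph{distinct} tails $X\one_{A_{n_i}}$ (with $k$ fixed in advance via $k\eta\varepsilon\geq 1$ and the indices chosen by dominated convergence so the modular sum is at most $1$), whereas you average $m$ disjoint copies of \emph{one} sufficiently deep tail $X\one_{A_n}$, balancing the fitting constraint $m\delta_n\leq 1$ against the modular constraint $m\alpha_n\leq 1$; the underlying estimate is identical.
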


\begin{proof} 
We verify (3) of Proposition \ref{MainLemma}.
Let $X\in  (L^\Phi)_+$, $A_n\downarrow \emptyset$ and $\varepsilon>0$ be given. Since $\bP(A_n)\rightarrow0$, by passing to a subsequence, we may assume that $\sum_{n=1}^\infty \bP(A_n)<1$. By non-atomicity of $\bP$,  there exists a disjoint sequence  $(B_n)_{n=1}^\infty$ of measurable sets such that $\bP(B_n)=\bP(A_n)$ for any $n\in\N$. Using non-atomicity again, we obtain, for each $n\in\N$,  a random variable $Z_n$, supported in $B_n$, such that  $Z_n\sim X\mathbf{1}_{A_n}$.

Let $\eta >0$ be such that $\E[\Phi(\eta  {X})] <\infty$. Choose $k\in \N$ so that $k\eta \varepsilon \geq 1$.
By Dominated Convergence Theorem, $\lim_n \E[\Phi(\eta {X}\mathbf{1}_{A_n})] = 0$. Thus we can pick $n_1,\dots, n_k$ such that
\[ \sum^k_{i=1}\E[\Phi(\eta  {X}\mathbf{1}_{A_{n_i}})]\leq 1.\]
Set $Z = \frac{1}{k}\sum^k_{i=1}Z_{n_i}$.  Then
\begin{align*}
\E\left[\Phi\Big(\frac{ {Z}}{\varepsilon}\Big)\right] & = \E\left[\Phi\Big(\frac{\eta  {Z}}{\eta\varepsilon}\Big)\right]\\& =
\sum^k_{i=1}\E\left[\Phi\Big(\frac{\eta  {Z_{n_i}}}{k\eta\varepsilon }\Big)\right] \text{ (since $Z_{n_i}$'s are disjoint)}\\
 & \leq \sum^k_{i=1}\E[\Phi(\eta   {X}\mathbf{1}_{A_{n_i}})]\leq  1.
\end{align*}
Hence, $\|Z\| \leq \varepsilon$. This completes the verification of condition (3) of Proposition \ref{MainLemma}.
\end{proof}
Combining Proposition~\ref{orlicznice} with Theorem~\ref{main-thm}, we obtain the following result.
\begin{corollary}\label{orliczfatou}
A real-valued, convex, decreasing, law-invariant functional  on an Orlicz space $L^\Phi$  has the Fatou property at any $X\in L^\Phi$ such that $X^-\in H^\Phi$.
\end{corollary}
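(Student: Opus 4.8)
The plan is to obtain this statement as a direct specialization of the two results just established: Proposition~\ref{orlicznice}, which asserts that $L^\Phi$ enjoys the AOCEA property, and the Main Theorem (Theorem~\ref{main-thm}), which converts the AOCEA property into an automatic Fatou property. The only genuine bookkeeping is to translate the abstract order continuous part appearing in Theorem~\ref{main-thm} into the concrete Orlicz heart $H^\Phi$.

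First I would record that $L^\Phi\neq L^\infty$: since the Orlicz function $\Phi$ is finite-valued, on a non-atomic probability space one can always build an unbounded member of $L^\Phi$, so $L^\Phi$ is a genuine r.i.\ space to which both Proposition~\ref{orlicznice} and Theorem~\ref{main-thm} apply. By Proposition~\ref{orlicznice}, $L^\Phi$ satisfies condition (3) of Theorem~\ref{main-thm} (with $\cX=L^\Phi$). Invoking the implication (3)$\implies$(2) of that theorem, every real-valued, convex, decreasing, law-invariant functional $\rho:L^\Phi\to\R$ has the Fatou property at every $X\in L^\Phi$ with $X^-\in(L^\Phi)_a$.

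To finish, I would identify $(L^\Phi)_a$ with $H^\Phi$. This is exactly the description of the order continuous part of an Orlicz space recalled in the Notation and Facts subsection: $X\in(L^\Phi)_a$ if and only if $\E[\Phi(\abs{X}/\lambda)]<\infty$ for every $\lambda>0$, which is the defining condition for the Orlicz heart. Substituting $(L^\Phi)_a=H^\Phi$ into the conclusion of the preceding paragraph gives precisely the corollary. I do not expect any real obstacle here, as all the substance lives in Proposition~\ref{orlicznice} and the Main Theorem; the only points needing a word of care are the applicability condition $L^\Phi\neq L^\infty$ and the identification $(L^\Phi)_a=H^\Phi$, both of which are routine consequences of the facts collected earlier.
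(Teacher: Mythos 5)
Your proposal is correct and follows essentially the same route as the paper, which obtains the corollary precisely by combining Proposition~\ref{orlicznice} with Theorem~\ref{main-thm} (implication (3)$\implies$(2)) and the identification $(L^\Phi)_a=H^\Phi$ recorded in the preliminaries. Your additional checks --- that $L^\Phi\neq L^\infty$ because a finite-valued Orlicz function admits unbounded elements of $L^\Phi$ over a non-atomic space, and that the order continuous part is the Orlicz heart --- are exactly the routine bookkeeping the paper leaves implicit.
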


It should be noted that the AOCEA property is not universally satisfied by all r.i.\
spaces. A brief example of an r.i.\ space failing it is presented at Appendix~\ref{poorexample}. 
\medskip

We also refer to Chen et al \cite{CGLLb} for other interesting applications of the AOCEA property, e.g., regarding collapse to the mean of law-invariant linear functionals. See also Bellini et al \cite{BCb} for more results on collapse to the mean. 

\medskip

At this point, we make a slight digression to discuss the set $\mathcal{CL}(X) =\mathrm{co}\{Y:Y\sim X\}$ that appears in the AOCEA property.
It has been   drawing attention in some recent works. 
Its relation with the well-studied set $\{Y\in \cX: Y\preceq_{cx} X\}$ is investigated in Bellini et al \cite{BC}. Recall that for random variables $X, Y \in L^1$, $Y \preceq_{cx} X$ means that $\E[f(Y)] \leq \E[f(X)]$  for every convex function $f: \R\to \R$, whenever the expectations exist. It was proved in \cite{BC} that
$$ \ol{\mathcal{CL}(X)}^{\sigma(\cX,\cX')}=\{Y\in \cX: Y\preceq_{cx} Y\},$$
where the left hand side is the closure of $\mathcal{CL}(X)$ in the $\sigma(\cX,\cX')$ weak topology.
This identity  implies in particular the coincidence of  law invariance and Schur convexity for certain convex functionals; see \cite{BC} for details. More results and applications about the set $\mathcal{CL}(X)$ can also be  found in Gao et al \cite{GLL}. 
Sets of a similar fashion as the set $\mathcal{AL}(X)$ in \eqref{alset} has also appeared in Mao and Wang \cite{MW} in their study of risk aggregation. We believe that more appealing aspects and applications of the set $\mathcal{CL}(X)$ will come forth in the literature.

\section{Proof of the Main Result}
In this section, we prove the main result of the paper, Theorem \ref{main-thm}. The implication (2)$\implies $(1) is obvious. In the following two subsections, we prove the implications (3)$\implies$(2) and (1)$\implies$(3), respectively.

\subsection{Proof of (3)$\implies$(2) in Theorem \ref{main-thm}}\label{2to1}

The following technical lemma reduces the Fatou property to a much simpler form.
\begin{lemma}\label{techlem2}
Let $\cX$ be an r.i.\ space over a probability space.
Let $\rho:\cX\rightarrow(-\infty,\infty]$ be a decreasing functional and $X\in\cX$. The following are equivalent:
\begin{enumerate}
    \item $\rho$ has the Fatou property at $X$;
    \item Let  $Y\in\cX$ be such that $Y\geq X$, $(c_n)$ be real numbers such that $c_n\downarrow 0$, and $(A_n)$ be measurable sets such that $A_n\downarrow\emptyset$. For any $n\in\N$, put $Y_n=X\mathbf{1}_{A_n^c}+c_n\mathbf{1}_{A_n^c}+Y\mathbf{1}_{A_n}$. Then $\rho(X)=\lim_n\rho(Y_n)$.
\end{enumerate}
\end{lemma}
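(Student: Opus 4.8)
The plan is to prove the two implications separately; (1)$\implies$(2) is routine, while (2)$\implies$(1) carries the real content. For (1)$\implies$(2), I would first observe that the prescribed sequence $Y_n = X\one_{A_n^c}+c_n\one_{A_n^c}+Y\one_{A_n}$ order converges to $X$ in the sense required by the Fatou property: since $A_n\downarrow\emptyset$, almost every $\omega$ eventually lies in $A_n^c$, whence $Y_n(\omega)=X(\omega)+c_n\to X(\omega)$, and $\abs{Y_n}\le\abs{X}+c_1\one+\abs{Y}\in\cX$ supplies the order bound. The Fatou property at $X$ then gives $\rho(X)\le\liminf_n\rho(Y_n)$. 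On the other hand $Y_n\ge X$ pointwise ($X+c_n\ge X$ on $A_n^c$ and $Y\ge X$ on $A_n$), so monotonicity of $\rho$ yields $\rho(Y_n)\le\rho(X)$ for every $n$, i.e.\ $\limsup_n\rho(Y_n)\le\rho(X)$. Sandwiching forces $\lim_n\rho(Y_n)=\rho(X)$, which is exactly (2).

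For (2)$\implies$(1), let $(X_n)$ order converge to $X$, say $X_n\to X$ a.s.\ with $\abs{X_n}\le X_0\in\cX$; note $X\le X_0=:Y$. The goal is $\rho(X)\le\liminf_n\rho(X_n)$. First I would pass to a subsequence along which $\rho(X_n)$ tends to $\liminf_n\rho(X_n)$; since this subsequence still converges a.s.\ (hence in probability) to $X$, I thin it further so that, with $c_j=1/j$, the sets $E_j=\{\abs{X_{n_j}-X}>c_j\}$ satisfy $\sum_j\bP(E_j)<\infty$. Setting $A_j=\bigcup_{i\ge j}E_i$ gives a decreasing family with $\bP(\bigcap_j A_j)=0$ by Borel--Cantelli; discarding this null intersection (legitimate as we work modulo null sets) we may assume $A_j\downarrow\emptyset$. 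On $A_j^c$ one has $X_{n_j}\le X+c_j$, and everywhere $X_{n_j}\le X_0=Y$, so $X_{n_j}\le Y_j:=X\one_{A_j^c}+c_j\one_{A_j^c}+Y\one_{A_j}$, which is precisely the special form of (2). Hypothesis (2) then gives $\rho(Y_j)\to\rho(X)$, while monotonicity gives $\rho(Y_j)\le\rho(X_{n_j})$; hence $\rho(X)=\lim_j\rho(Y_j)\le\liminf_j\rho(X_{n_j})=\liminf_n\rho(X_n)$, as desired.

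The main obstacle is the construction in the second implication: I must produce a \emph{single} subsequence that simultaneously realizes the liminf of $\rho(X_n)$ and is sparse enough for a Borel--Cantelli argument to furnish a genuinely decreasing family of exceptional sets $A_j\downarrow\emptyset$ dominating the tails of $X_{n_j}$. The two demands are compatible because further thinning preserves both the limiting value of $\rho$ along the subsequence and a.s.\ (equivalently, in-probability) convergence; the delicate bookkeeping lies in coordinating the choice of $c_j\downarrow0$ with the sets $E_j$ and in reducing $\bigcap_j A_j$ to the empty set rather than merely a null set, so that condition (2) applies verbatim.
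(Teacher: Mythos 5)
Your proof is correct and follows essentially the same route as the paper's: the (1)$\implies$(2) direction is identical, and for (2)$\implies$(1) you construct the same decreasing exceptional sets $A_j$ and comparison variables $Y_j$ of the form in (2), merely replacing the paper's appeal to Egoroff's theorem with an in-probability/Borel--Cantelli thinning and arguing directly rather than by contradiction. These are interchangeable technical devices, so the proof is the same in substance.
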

\begin{proof}
Assume that (1) holds. Let  $Y $,   $(c_n)$,  $(A_n)$ and $(Y_n)$ be as given in (2). Since $Y_n\geq X\mathbf{1}_{A_n^c}+0 +X\mathbf{1}_{A_n}=X$, $\rho(Y_n)\leq \rho(X)$ for any $n\geq 1$. In particular, $\limsup_n\rho(Y_n)\leq \rho(X)$. Since $X\leq Y_n\leq Y\mathbf{1}_{A_n^c}+c_1\mathbf{1}+Y\mathbf{1}_{A_n}=Y+c_1\mathbf{1}$, $(Y_n)$ is dominated in $\cX$. It is easy to see that $Y_n\stackrel{a.s.}{\longrightarrow}X$.  Thus by the assumption (1), $\rho(X)\leq \liminf_n\rho(Y_n)$. It follows that $\rho(X)=\lim_n\rho(Y_n)$. This proves that (1)$\implies$(2).

Assume that (2) holds. Suppose otherwise that (1) fails. Then  there exists a sequence $(X_n)$ that order converges to $X$ in $\cX$ but $\rho(X)>\liminf_n\rho(X_n):=\lambda\in[-\infty,\infty)$. By switching to a subsequence of $(X_n)$, we may assume that \begin{align}\label{smalll1}
    \rho(X_n) \to \lambda < \rho(X).
\end{align}

For any $k\in\N$, by Egoroff's Theorem, there exist $n_k\in\N$ and a measurable set $B_k$ such that $\bP(B_k)\leq \frac{1}{2^k}$ and $\abs{X_{n_k }-X}\leq\frac{1}{k}$ on $B_k^c$. Set $A_k=\bigcup_{m\geq k}B_m$. Then $(A_k)$ is a decreasing sequence of measurable sets such that  $\bP(A_k)\leq \sum_{m\geq k}\bP(B_m)\leq \sum_{m\geq k}\frac{1}{2^m}=\frac{1}{2^{k-1}}\rightarrow 0$ and thus $\bP(\bigcap_{k=1}^\infty A_k)=0$. Without loss of generality, we may assume that
$A_k\downarrow\emptyset$. Clearly, $\abs{X_{n_k}-X}\leq\frac{1}{k}$ on $A_k^c$ for any $k\geq 1$. Since $(X_n)$ order converges to $X$ in $\cX$,  there exists $Y\in \cX$ such that $X_n \leq Y$ for all $n\in\N$. Since $X_n\stackrel{a.s.}{\longrightarrow}X$, $X\leq Y$.
Set
$ Y_k = X\mathbf{1}_{A_k^c}+\frac{1}{k}\mathbf{1}_{A_k^c}+Y\mathbf{1}_{A_k}$ for $k\in\N$. Then $$X_{n_k}=X_{n_k}\mathbf{1}_{A_k^c}+X_{n_k}\mathbf{1}_{A_k} \leq \Big(X+\frac{1}{k}\Big)\mathbf{1}_{A_k^c}+Y\mathbf{1}_{A_k}= Y_k$$ and hence $\rho(Y_k) \leq \rho(X_{n_k})$. By assumption (2),
\[ \rho(X) = \lim_k \rho(Y_k) \leq \lim_k \rho(X_{n_k}) =\lim_n\rho(X_n) =\lambda,\]
contradicting \eqref{smalll1}.  This proves that (2)$\implies$(1).
\end{proof}

\begin{proof}[Proof of Theorem \ref{main-thm} (3)$\implies$(2)] Assume that $\cX$ satisfies the AOCEA property. Let $\rho:\cX\to \R$ be a convex, decreasing, law-invariant functional. Let $X\in \cX$ be such that $X^-\in \cX_a$. We   apply the previous lemma  to establish the Fatou property at $X$. Suppose otherwise that (2) in Lemma \ref{techlem2}  fails.
Then there exist $ \cX\ni Y\geq X$, $c_n\downarrow 0$ and  $A_n\downarrow\emptyset$ such that $\rho(Y_n)\not\rightarrow\rho(X)$, where $Y_n =( X+c_n\mathbf{1})\mathbf{1}_{A_n^c}+ Y\mathbf{1}_{A_n}$ for $n\in\N$.  Since $Y_n\geq X$, $\rho(Y_n)\leq \rho(X)$.  Thus there exist $\varepsilon>0$ and a subsequence $(Y_{n_k})$ such that $\rho(Y_{n_k})<\rho(X)- \varepsilon$ for all $k$. Replacing $(Y_n)$ with $(Y_{n_k})$, we may assume that
\begin{align}\label{little}
    \rho(Y_n)<\rho(X)-\varepsilon\quad\text{ for all }n\in\N.
\end{align}

We aim at contradicting \eqref{little}.
Recall from  \cite[Proposition 3.1]{RS06} that $\rho$ is norm continuous. Thus there exists $\delta >0$ such that \begin{align}\label{normcon}
    \rho(X+V)>  \rho(X)- \varepsilon\quad\text{ if }V\in \cX\text{ and } \norm{V}\leq\delta.
\end{align}
Take a real number $r>0$ such that  $\mathbb{P}(B) >0$, where $B= \{|X| \leq r\}$.
Since $\mathbf{1}\in L^\infty\subset  \cX_a$, there exists $\eta>0$ such that
\begin{align}\label{estimate1}
    \norm{\mathbf{1}_C} \leq \frac{\delta}{6r}\quad\text{ if }\mathbb{P}(C) \leq  \eta.
\end{align}Similarly, since $X^-\in \cX_a$ and $\mathbb{P}(A_n) \to 0$, $\|X^-\mathbf{1}_{A_n}\| \to 0$.
Thus by passing to subsequences if necessary, we may assume that $\bP(A_1)<\bP(B)$ and  $$\sum_{n=1}^\infty \mathbb{P}(A_n) \leq  \min\big\{\eta, \mathbb{P}(B\backslash A_1)\big\},$$
$$\|X^-\mathbf{1}_{A_n}\|\leq  \frac{\delta}{6}\quad\text{ for all }n\in\N,$$
$$c_n\|\mathbf{1}\| \leq  \frac{\delta}{6}\quad\text{ for all }n\in\N.$$

By Proposition \ref{MainLemma}(4) and passing to a subsequence of $(A_n)$ if necessary, there exist random variables  $(Z_n)_{n=1}^k$, all supported in $B\backslash A_1$, and a convex combination $  \sum^k_{n=1}\lambda_nZ_n $ such that  $$Z_n \sim Y\mathbf{1}_{A_n} \text{ for }n=1,\dots, k,\quad \text{and}\quad\Bignorm{\sum^k_{n=1}\lambda_nZ_n} < \frac{\delta}{6}.$$
For $n=1,\dots,m$, since $\{Z_n\neq 0\}\subset B\backslash A_1$ and $A_n\subset A_1$, $\{Z_n\neq 0\}\cap A_n=\emptyset$. Moreover, $\bP(Z_n\neq 0)=\bP(Y\mathbf{1}_{A_n}\neq 0)\leq \bP(A_n)$. Take $B_n\subset A_n$ such that $\bP(B_n)=\bP(A_n)-\bP(Z_n\neq0)$.
Hence, we get a partition of $\Omega$:
\begin{align}\label{newpart}
\Omega=\big(A_n^c\cap \{Z_n=0\}\big)\cup \big(A_n\backslash B_n\big)\cup \big(B_n\cup \{Z_n\neq 0\}\big).
\end{align}
On the other hand, we have another partition of $\Omega$ as follows:
\begin{align}\label{oldpart}
    \Omega=\big(A_n^c\cap \{Z_n=0\}\big)\cup   \{Z_n\neq 0\} \cup A_n.
\end{align}Since $\{Z_n\neq 0\}$ has the same probability as $A_n\backslash B_n$, we can take a random variable $W_n$, supported on $A_n\backslash B_n$, such that $W_n\sim (X+c_n\mathbf{1})\mathbf{1}_{  \{Z_n\neq0\}}$.
Now for any $n=1,\dots,k$, define $Y_n'$ piecewise according to the partition of $\Omega$ in \eqref{newpart}:
$$Y_n'=(X+c_n\mathbf{1})\mathbf{1}_{A_n^c\cap \{Z_n=0\}}+W_n+Z_n.$$
Comparing with $Y_n$ along the partition in \eqref{oldpart}, we see that $Y_n'\sim Y_n$.

Recall that $\{Z_n\neq 0\}\subset B\backslash A_1\subset B$. Thus by the choice of $B$, $\abs{X\mathbf{1}_{\{Z_n\neq 0\}}}\leq r\mathbf{1}_{\{Z_n\neq0\}}$. By \eqref{estimate1} and the fact that $\bP(Z_n\neq 0)\leq \bP(A_n)\leq \eta$, we have $$\norm{X\mathbf{1}_{\{Z_n\neq0\}}}\leq r\norm{\mathbf{1}_{\{Z_n\neq 0\}}}\leq r\frac{\delta}{6r}=\frac{\delta}{6}.$$
It follows that
\[\norm{W_n} \leq \norm{X\mathbf{1}_{\{Z_n\neq0\}}} + c_n\norm{\mathbf{1}} \leq \frac{\delta}{3}\quad \text{  for }n=1,\dots,k.\]
Recall that  $\{Z_n\neq 0\}$ and $ A_n$ are disjoint. Thus  rewrite
\[
Y_{n}'= X- X^+\mathbf{1}_{A_{n}}+  X^-\mathbf{1}_{A_{n}}-X\mathbf{1}_{\{Z_n\neq 0\}} + c_{n} \mathbf{1}_{A_{n}^c\cap \{Z_n=0\}}+ W_{n}+Z_{n} .\]
Then
\[\sum^k_{n=1}\lambda_nY_n' = X -\sum^k_{n=1}\lambda_nX^+\mathbf{1}_{A_{n}}+V,
\]
where
\[ V=\sum^k_{n=1}\lambda_n(X^-\mathbf{1}_{A_{n}}-X\mathbf{1}_{\{Z_n\neq0\}} + c_{n} \mathbf{1}_{A_{n}^c\cap \{Z_n=0\}}+ W_{n}+Z_{n}).\]
We have
\[ \norm{V} \leq \Bignorm{\sum^k_{n=1}\lambda_nZ_n} +\sum^k_{n=1}\lambda_n\big(\norm{X^-\mathbf{1}_{A_{n}}} + \norm{X\mathbf{1}_{\{Z_n\neq0\}}} + c_{n}\|\mathbf{1}\| +\norm{W_{n}})\leq \delta.
\]
Hence by  monotonicity of $\rho$ and \eqref{normcon},
\begin{align}\label{norm-lsc}
     \rho\Big(\sum^k_{n=1}\lambda_nY'_{n}\Big) \geq \rho(X+V) > \rho(X) - \varepsilon.
\end{align}

Finally,
\begin{align*}
\sum^k_{n=1}\lambda_n\rho(Y_{n}) & = \sum^k_{n=1}\lambda_n\rho(Y_{n}') \text{ (law invariance)}\\
&\geq \rho(\sum^k_{n=1}\lambda_nY_{n}') \text{ (convexity)}\\
& > \rho(X) -\varepsilon.
\end{align*}
Hence, there exists $n$ such that $\rho(Y_{n}) > \rho(X) - \varepsilon$, contradicting \eqref{little}.
\end{proof}

\begin{remark}
The reader may re-examine  the role of real-valuedness of $\rho$ in the proof of (3)$\implies$(2). It is only used to ensure norm lower semicontinuity of $\rho$ at $X$; see \eqref{normcon} and \eqref{norm-lsc}. Therefore, one sees that the following    statement is also equivalent to the AOCEA property when $\cX\neq L^\infty$ and can be added to Theorem \ref{main-thm}:
\begin{enumerate}
    \item[(1')] Every convex, decreasing, law-invariant functional  $\rho:\cX\to (-\infty,\infty]$ has the Fatou property at any $X\in \cX$, where $\rho$ is norm lower semicontinuous and   $X^-\in \cX_a$.
   \end{enumerate}

\end{remark}

\subsection{Proof of (1)$\implies$(3) in Theorem \ref{main-thm}}
Throughout this subsection, assume that $\cX$ is an r.i.\ space over a non-atomic probability space and  $\cX\neq L^\infty$.  If $\cX$ fails the AOCEA property, we aim to construct a law-invariant coherent risk measure $\rho : \cX \to \R$ that fails the Fatou property at $0$.

The following discretization lemma will be useful in the course of the construction. 
Let $f$ be a positive linear functional on $\cX$, i.e., $f(X)\geq 0$ for any $X\geq 0$. Clearly, $f$ is positive iff $f$ is increasing, i.e., $f(X_1)\geq f(X_2) $ whenever $X_1\geq X_2$. By  Birkhoff's Theorem (\cite[Theorem 4.3]{AB}), $f$ is bounded on $\cX$. Therefore, if $X'\sim X\in\cX$, then $f(X')\leq \norm{f}\norm{X'}=\norm{f}\norm{X}$.
It follows that for any $X\in\cX$,
$$\sup\{f(X'): X'\sim X\} \in\R.$$
\begin{lemma}\label{discretization}
Let $f$ be a positive   linear functional on $\cX$. 
For any $X\in\cX$,
\begin{equation}\label{e3.6}\sup\{f(X'): X'\sim X\}   =\sup\{f(Z):  Z\sim U\leq X,U\in\mathcal{X}, U\text{ is discrete}\}.\end{equation}
\end{lemma}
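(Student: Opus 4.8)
Write $L:=\sup\{f(X'):X'\sim X\}$ and $R:=\sup\{f(Z):Z\sim U\le X,\ U\in\cX\ \text{discrete}\}$ for the two sides of \eqref{e3.6}. The plan is to establish $R\le L$ and $L\le R$ separately. Since $f$ is bounded (Birkhoff's theorem, recalled just before the lemma) and $L\in\R$, this will also show both quantities are finite.

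For the inequality $R\le L$, I would fix a discrete $U\le X$ and some $Z\sim U$, and show that $Z$ can be dominated by a copy of $X$. Write $U=\sum_j u_j\one_{D_j}$ for a countable partition $(D_j)$ (grouping $\Omega$ by the value of $U$) and, since $Z\sim U$, write $Z=\sum_j u_j\one_{E_j}$ for a partition $(E_j)$ with $\bP(E_j)=\bP(D_j)$. On each $D_j$ we have $X\ge U=u_j$, so, using $\bP(E_j)=\bP(D_j)$ and the non-atomicity facts of the ``Notation and Facts'' subsection, I can choose a random variable on $E_j$ distributed as $X|_{D_j}$; gluing these pieces produces $X'$ with $X'|_{E_j}\sim X|_{D_j}$ for every $j$, whence $X'\sim X$ by the partition-gluing fact, while $X'\ge u_j=Z$ on each $E_j$, i.e. $X'\ge Z$. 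Positivity (equivalently monotonicity) of $f$ then gives $f(Z)\le f(X')\le L$, and taking the supremum over admissible $Z$ yields $R\le L$.

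For the reverse inequality $L\le R$, I would fix $X'\sim X$ and quantize from below. Set $\phi_m(t)=\lfloor mt\rfloor/m$, so that $\phi_m(t)\le t$ and $0\le t-\phi_m(t)\le 1/m$ for all $t$, and put $U_m'=\phi_m(X')$ and $U_m=\phi_m(X)$. Both take countably many values (in $\frac1m\mathbb{Z}$), hence are discrete, and lie in $\cX$ since $|U_m'|\le|X'|+\one$ for $m\ge1$; moreover $U_m\le X$, and because $\phi_m$ is a fixed Borel map and $X'\sim X$ we have $U_m'=\phi_m(X')\sim\phi_m(X)=U_m$. Thus $U_m'$ is an admissible $Z$ for $R$ with witness $U_m\le X$, so $f(U_m')\le R$. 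As $0\le X'-U_m'\le\tfrac1m\one$, we get $\norm{X'-U_m'}\le\frac1m\norm{\one}$, and therefore
\[ f(X')=f(U_m')+f(X'-U_m')\le R+\norm{f}\,\tfrac1m\norm{\one}. \]
Letting $m\to\infty$ gives $f(X')\le R$, and the supremum over $X'\sim X$ yields $L\le R$.

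I expect the construction in $R\le L$ to be the main obstacle: producing a copy $X'$ of $X$ lying above the given $Z$ amounts to exhibiting a monotone coupling of $X$ and $Z$, and the key point is that discreteness of $U$ reduces this to gluing countably many equidistributed pieces, each handled by the non-atomicity fact (without discreteness one would be forced to invoke regular conditional distributions). By contrast, the direction $L\le R$ is routine once one observes that the floor-quantization error is uniformly at most $1/m$, so its $\cX$-norm is controlled by $\frac1m\norm{\one}$ irrespective of any heavy tails of $X'$; this is precisely where $\one\in\cX$ (i.e.\ $L^\infty\subset\cX$) enters.
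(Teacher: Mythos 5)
Your proof is correct, but it follows a genuinely different route from the paper's. The paper inserts an intermediate quantity $\phi_3(X)=\sup\{f(Z):Z\sim U\leq X,\,U\in\cX\}$ (no discreteness imposed) and proves the two equalities $\phi_1=\phi_3$ and $\phi_2=\phi_3$. For the coupling direction (your $R\leq L$), it simply invokes Lemma \ref{movingorder}, which produces $X'$ with $Z\leq X'\sim X$ via quantile functions for \emph{arbitrary} $Z\sim U\leq X$; you instead build $X'$ by hand, partitioning by the countably many values of $U$ and gluing equidistributed copies of $X\vert_{D_j}$ onto the sets $E_j=\{Z=u_j\}$ --- a self-contained argument that works precisely because $U$ is discrete, which is all your direct decomposition requires. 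For the quantization direction (your $L\leq R$), the paper discretizes $Z$ and $U$ on the multiplicative grid $\{\pm a^n\}_{n\in\mathbb{Z}}$, obtaining the sandwich $Z\geq Z'\geq\frac{1}{a}Z^+-aZ^-$, so that positivity of $f$ alone suffices upon letting $a\downarrow 1$; you discretize $X'$ and $X$ on the additive grid $\frac{1}{m}\mathbb{Z}$, with uniform error at most $\frac{1}{m}\one$, and then appeal to $\one\in\cX$ and $\norm{f}<\infty$ (in fact positivity already gives $f(X'-U_m')\leq\frac{1}{m}f(\one)$, so even Birkhoff's theorem is dispensable at this step). Each approach buys something: the paper's reuses Lemma \ref{movingorder}, which it needs elsewhere anyway (Lemma \ref{p3.4}, Proposition \ref{MainLemma}), and its scale-invariant quantization makes no reference to $\one$ or to the norm of $f$; yours is self-contained and arguably more elementary, at the cost of an explicit gluing construction that reproduces, in the discrete case, what the quantile coupling provides in general.
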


\begin{proof}
Take any $X\in\cX$. Denote the left and right hand sides of \eqref{e3.6} by $\phi_1(X)$ and $\phi_2(X)$, respectively. Also, put $$\phi_3(X)=\sup\{f(Z):  Z\sim U\leq X,U\in\mathcal{X}\}.$$
We first show that $\phi_1(X)=\phi_3(X)$.
If  $X'\sim X$, then $X'\sim X\leq X$, implying that $\phi_1(X)\leq\phi_3(X)$.
Let $Z\in\cX$ be such that $Z\sim U\leq X$ for some $U\in\cX$. By Lemma \ref{movingorder}, there exists a random variable $X'$ such that $Z\leq X'\sim X$. Clearly, $X'\in\cX$. Since $f$ is increasing, $f(Z)\leq f(X')\leq \phi_1(X)$. Taking supremum over $Z$, we obtain $\phi_3(X)\leq \phi_1(X)$. It follows that $\phi_1(X)=\phi_3(X)$, as desired. 

Apparently, $\phi_3(X)\geq \phi_2(X)$.
To see the reverse inequality,   take any random variables $Z,U\in\cX$  such that   $  Z\sim U \leq X$.  For any  real number   $a>1$, put $$Z'= \sum_{n\in \mathbb{Z}} a^n \mathbf{1}_{\{a^n <Z\leq a^{n+1}\}}-\sum_{n\in \mathbb{Z}} a^{n+1} \mathbf{1}_{\{-a^{n+1} \leq Z<- a^n\}},$$
$$U'= \sum_{n\in \mathbb{Z}} a^n \mathbf{1}_{\{a^n <U\leq a^{n+1}\}}-\sum_{n\in \mathbb{Z}} a^{n+1} \mathbf{1}_{\{-a^{n+1} \leq U<- a^n\}}.$$
Then $U'$ is discrete and $Z'\sim U' \leq U\leq X$. Moreover, $Z\geq Z'\geq \frac{1}{a}Z^+-aZ^-$, so that $Z'\in\mathcal{X}$.  Hence
$$\phi_2(X)    \geq f(Z') 
    \geq \frac{1}{a}f(Z^+)-af(Z^-).$$
Letting $a\downarrow 1$ and  taking supremum over $Z$, we obtain $\phi_2(X)\geq \phi_3(X)$. It follows that $\phi_2(X)=\phi_3(X)=\phi_1(X)$, completing the proof.
\end{proof}

\begin{lemma}\label{p3.4}
Let $f$ be a positive  linear functional on $\cX$.
Define $\phi:\cX\to \R$ by 
\[ \phi(X) = \sup\{f(X'): X'\sim X\}.\]
Then $\phi$ is  law invariant, increasing, positive homogeneous  and subadditive    on $\cX$. If in addition $f$ vanishes on $L^\infty$, then $\phi(X+m\one) = \phi(X)$  for all $X\in \cX$ and $m\in\R$.
\end{lemma}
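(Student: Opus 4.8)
The plan is to verify each of the five asserted properties of $\phi(X) = \sup\{f(X'): X'\sim X\}$ directly from the definition, exploiting the fact that $f$ is a fixed positive (hence increasing and, by Birkhoff's Theorem, bounded) linear functional. The properties split naturally into easy bookkeeping and one genuinely interesting case. First I would dispose of \emph{law invariance}: if $X_1\sim X_2$, then $\{X':X'\sim X_1\} = \{X':X'\sim X_2\}$ since $\sim$ is an equivalence relation, so the two suprema are literally over the same set and $\phi(X_1)=\phi(X_2)$. \emph{Positive homogeneity} is equally immediate: for $\lambda\geq 0$, the map $X'\mapsto \lambda X'$ is a bijection between $\{X':X'\sim X\}$ and $\{X':X'\sim \lambda X\}$, and $f(\lambda X')=\lambda f(X')$ by linearity, so $\phi(\lambda X)=\lambda\phi(X)$.

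For \emph{monotonicity}, suppose $X_1\geq X_2$. The idea is to show every competitor for $\phi(X_2)$ is dominated by a competitor for $\phi(X_1)$. Given $X_2'\sim X_2$, I would invoke the coupling/rearrangement lemma (the ``\texttt{movingorder}'' lemma cited in the proof of Lemma~\ref{discretization}) to produce $X_1'\sim X_1$ with $X_1'\geq X_2'$; since $f$ is increasing, $f(X_2')\leq f(X_1')\leq \phi(X_1)$, and taking the supremum over $X_2'$ gives $\phi(X_2)\leq\phi(X_1)$. This is the one place where the abstract supremum definition needs a genuine rearrangement tool rather than pure algebra, so I expect it to be the main (albeit modest) obstacle.

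For \emph{subadditivity}, fix $X_1,X_2\in\cX$ and let $Z\sim X_1+X_2$. The key maneuver is to realize $Z$ as a sum of equidistributed copies: choose $Z_1\sim X_1$ and $Z_2\sim X_2$ on the same probability space with $Z=Z_1+Z_2$ (again coupling on the joint law of $(X_1,X_2)$ whose sum is $X_1+X_2$, and transferring it via the non-atomicity facts recorded in the Notation subsection). Then by linearity $f(Z)=f(Z_1)+f(Z_2)\leq \phi(X_1)+\phi(X_2)$, and taking the supremum over $Z\sim X_1+X_2$ yields $\phi(X_1+X_2)\leq\phi(X_1)+\phi(X_2)$.

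Finally, for the \emph{conditional cash-additivity} statement, assume $f$ vanishes on $L^\infty$. Since $m\one\in L^\infty$, for any $X'\sim X$ we have $X'+m\one\sim X+m\one$, and $f(X'+m\one)=f(X')+f(m\one)=f(X')$ because $f(m\one)=0$. Thus the map $X'\mapsto X'+m\one$ is a bijection from $\{X':X'\sim X\}$ onto $\{W:W\sim X+m\one\}$ that preserves the value of $f$, whence $\phi(X+m\one)=\phi(X)$. This completes all five verifications, the only non-routine ingredient being the simultaneous-coupling construction underlying monotonicity and subadditivity.
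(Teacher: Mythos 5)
Your verification of law invariance, positive homogeneity, monotonicity and the cash-invariance statement is correct, and the monotonicity argument (via Lemma \ref{movingorder}) and the translation argument (using $f(m\one)=0$) are exactly the paper's.

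The subadditivity step, however, contains a genuine gap. You assert that any $Z\sim X_1+X_2$ can be written as $Z=Z_1+Z_2$ with $Z_1\sim X_1$ and $Z_2\sim X_2$, ``coupling on the joint law of $(X_1,X_2)$ \dots and transferring it via the non-atomicity facts.'' Those facts only produce \emph{some} random vector $(\tilde Z_1,\tilde Z_2)\sim(X_1,X_2)$; its sum then has the same \emph{law} as $Z$, but need not equal the given $Z$ pointwise. To force $Z_1+Z_2=Z$ for the prescribed $Z$ one must construct $(Z_1,Z_2)$ conditionally on $Z$, and this requires randomness beyond $\sigma(Z)$, which a non-atomic space need not supply. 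Concretely, take $\Omega=(0,1)$ with Lebesgue measure, let $X_1,X_2$ be i.i.d.\ uniform on $(0,1)$ (these lie in $L^\infty\subset\cX$), and let $Z$ be the injective increasing function on $(0,1)$ whose law is that of $X_1+X_2$, so that $\sigma(Z)$ is the whole $\sigma$-algebra up to null sets. Any decomposition $Z=Z_1+Z_2$ then has $Z_i=h_i(Z)$ a.s.\ for Borel $h_i$, so the conditional law of $(Z_1,Z_2)$ given $Z_1+Z_2=s$ is a point mass; but if $(Z_1,Z_2)\sim(X_1,X_2)$, this conditional law must coincide with that of $(X_1,X_2)$ given $X_1+X_2=s$, which is uniform on a nondegenerate segment for a.e.\ $s$ --- a contradiction. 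So the coupling you invoke does not exist in general (and even the weaker, marginals-only version, i.e.\ deterministic $h_1$ with $h_1(Z)\sim X_1$ and $Z-h_1(Z)\sim X_2$, is nothing the quoted non-atomicity facts can deliver).

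This is precisely the obstacle the paper's proof is engineered to bypass, and it is why Lemma \ref{discretization} is proved first: one reduces to showing $f(X)\leq\phi(U)+\phi(V)$ for a \emph{discrete} $X\sim Z\leq U+V$. For discrete $X$, every level set $\{X=a_i\}$ has positive probability and carries a non-atomic trace measure, so on each such set one can build $U'\vert_{\{X=a_i\}}\sim U\vert_{\{Z=a_i\}}$ and set $V'\vert_{\{X=a_i\}}=a_i-U'\vert_{\{X=a_i\}}$; gluing gives $X=U'+V'$ with $U'\sim U$ and $V'\sim Z-U\leq V$. The price of discretization is that the equality $Z\sim U+V$ is replaced by the inequality $Z\leq U+V$, which is exactly why the identity $\phi_1=\phi_3$ in Lemma \ref{discretization} (suprema over equidistributed copies versus over copies of minorants) is needed to close the argument. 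Your proposal skips this reduction, and without it the conditional coupling at the heart of your subadditivity argument cannot be carried out.
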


\begin{proof}
 
Clearly $\phi$ is law invariant and positive homogeneous.
Suppose that  $X_1\leq X_2$ in $\cX$ and take any $X'\sim X_1$. 
By Lemma \ref{movingorder}, there exists a random variable $X_2'$ such that $X'\leq  X_2'\sim X_2$. Since $f$ is increasing,
$f(X') \leq f(X_2')\leq \phi(X_2).$
Taking supremum over $X'$ gives $\phi(X_1)\leq \phi(X_2)$.
This proves that $\phi$ is increasing.
Next,   let us show that $\phi$ is subadditive.
Consider any  any $U,V\in\mathcal{X}$.  By Lemma \ref{discretization}, it suffices to show that if  $X \in \cX$ is a discrete random variable such that $X\sim Z \leq U+V$, then $f(X) \leq \phi(U) + \phi(V)$.
Let $\{a_i\}_{i\in I}$ be the (at most countable) set of  values   $a\in\R$ such that $\mathbb{P}(X=a)>0$. Then
$\mathbb{P}(Z=a_i)=\mathbb{P}(X=a_i)$. On $\{X=a_i\}$, find $U'\vert_{\{X=a_i\}}\sim U\vert_{\{Z=a_i\}}$, and put$$ V'\vert_{\{X=a_i\}}:=a_i -U'\vert_{\{X=a_i\}}\sim a_i-U\vert_{\{Z=a_i\}}.$$
Glue over $i$ to obtain random variables $U'$ and $V'$. Clearly, $U'\sim U$, $V' \sim Z-U\leq V$, and $X = U'+V'$.
Hence, $f(X)=f(U')+f(V')\leq \phi(U)+\phi(V)$ by Lemma \ref{discretization}. This establishes subadditivity of $\phi$.

Finally, consider $X\in \cX$, $m\in \R$ and  $Z\sim X+m\mathbf{1}$. Then $Z-m\mathbf{1}\sim X$. Since $f$ vanishes on $L^\infty$,  
$f(Z)  = f(Z-m\one)  \leq {\phi}(X).$
Taking supremum over $Z$ gives $\phi(X+m\mathbf{1})\leq \phi(X)$. The same inequality gives
\[{\phi}(X)={\phi}(X+m\mathbf{1}+(-m)\mathbf{1})\leq {\phi}(X+m\mathbf{1}).\] Therefore, ${\phi}(X+m\mathbf{1})={\phi}(X)$.
\end{proof}

With Lemma~\ref{p3.4} at hand, we now present the proof of  (1)$\implies$(3) in Theorem \ref{main-thm}.

\begin{proof}[Proof of (1)$\implies$(3) in Theorem \ref{main-thm}]
Suppose that $\cX$ fails the AOCEA property.
Take $X_0\in \cX_+$ such that $ \mathrm{d}(\mathcal{CL}(X_0),{\cX_a}) >0$, where $\mathcal{CL}(X_0) = \mathrm{co}\{X:X \sim X_0\}$.  We will show that condition (1) of Theorem \ref{main-thm} fails.

First, we claim that $\mathrm{d}\big(\mathcal{CL}(X_0), \mathrm{co}(\cX_a\cup (-\cX_+))\big) > 0$.
Indeed, since $\cX_a $ and $-\cX_+$ are both convex, a general element of $\mathrm{co}(\cX_a\cup (-\cX_+))$ is of the form $\alpha V- (1-\alpha)W$, where $0\leq \alpha \leq1$, $V\in \cX_a$ and $W \in \cX_+$. Using the notation  in Example \ref{counterexam}, for any $Z\in\mathcal{CL}(X_0)$,  $[Z + (1-\alpha)W]\geq [Z]\geq [0]$ in the quotient space $\cX/{\cX_a}$.  Hence
\begin{align*}
\bignorm{Z - \big(\alpha V- (1-\alpha)W\big)} &\geq \bignorm{[Z - (\alpha V- (1-\alpha)W)]}_q \\&= \bignorm{[Z + (1-\alpha)W]}_q \\&
\geq \norm{[Z]}_q = \mathrm{d}(Z,\cX_a).
\end{align*}
This proves that $\mathrm{d}\big(\mathcal{CL}(X_0), \mathrm{co}(\cX_a\cup (-\cX_+))\big) \geq \mathrm{d}(\mathcal{CL}(X_0),{\cX_a}) > 0$, as claimed.

Let $\mathcal{B}$ denote  the open unit ball of $\cX$.
By the claim, there exists $r>0$ such that $\mathcal{CL}(X_0)$ and $\mathrm{co}(\cX_a\cup (-\cX_+)) + r\mathcal{B}$ are disjoint (convex) sets in $\cX$.  Since the latter set is an open set,
the Hahn-Banach Separation Theorem (\cite[Theorem 3.4]{Rudin}) says that there is a nonzero linear functional $f\in \cX^*$ such that
\[ \sup\{f(X) : X \in \mathrm{co}(\cX_a\cup (-\cX_+))+r\mathcal{B}\} \leq  \inf\{f(Z) : Z\in \mathcal{CL}(X_0)\}.\]
In particular, $\sup\{f(X): X\in \cX_a\}<\infty$ and $\sup\{f(X): X\leq 0\} < \infty$. Consequently, since $\cX_a$ is a linear space and  $\{X\leq 0\}$ is a cone, $f =0$ on $\cX_a$ and $f(X)\leq 0$ if  $X\leq 0$. It follows from the latter conclusion that  $f$ is positive. 
Furthermore, since $f\neq 0$,
\begin{align}\label{2.9}
0=f(0)&\leq  \sup\{f(X) : X \in  \mathrm{co}(\cX_a\cup (-\cX_+))\} \nonumber
\\
&< \sup\{f(X) : X \in  \mathrm{co}(\cX_a\cup (-\cX_+))\}+r\norm{f}  \\&= \sup\{f(X) : X \in  \mathrm{co}(\cX_a\cup (-\cX_+))+r\mathcal{B}\} \nonumber\\&\nonumber \leq  \inf\{f(Z) : Z\in \mathcal{CL}(X_0)\} := \beta.\end{align}
Define  $ \phi:\cX\rightarrow \R$ by $\phi(X)=\sup\{f(X'): X'\sim X\}$ for $X\in\cX$. Set $$\rho(X) = \phi(-X)-\E[X], \text{for any }X\in\cX.$$
Invoking Lemma \ref{p3.4}, one sees that $\rho$ is a law-invariant coherent risk measure on $\cX$. 

It remains to show that $\rho$ fails the Fatou property at $0$.
For any $n\geq 1$, let $X_n = X_0\mathbf{1}_{\{X_0\geq n\}}$.
Then $(X_n)$ order converges to $0$. Let's compute $\rho(X_n)$.
Suppose that $Z\sim -X_n$.  Then $-Z\geq 0$ and $\mathbb{P}(-Z >0) = \mathbb{P}(X_0\geq n)$ so that  there exists a random variable $W$, supported on $\{-Z >0\}^c=\{Z=0\}$, such that $W \sim -X_0\mathbf{1}_{\{X_0<n\}}$.
Clearly,  $-(Z+W) \sim X_0$, and   by \eqref{2.9}, $f(-(Z+W)) \geq \beta > 0$.
Since $W\in L^\infty\subset \cX_a$, $f(W) =0$, implying that   $f(Z)  \leq -\beta < 0$. Thus $\phi(-X_n) \leq -\beta<0$, and consequently,   $ \rho(X_n)\leq -\beta-\mathbb{E}(X_n)$ for all $n\in\N$.
It follows that
\[ \liminf_n  {\rho}(X_n) \leq -\beta < 0 =  {\rho}(0).\]
Thus ${\rho}$ fails the Fatou property at $0$, i.e.,
condition (1) of Theorem \ref{main-thm} fails.
\end{proof}

\section{Automatic Representations}

In this section, building on Theorem \ref{main-thm} and the techniques developed for the proof, we obtain    automatic $\sigma(\cX,\cX')$ lower semicontinuity  and corresponding dual representations     of law-invariant risk measures.

\subsection{Automatic $\sigma(\cX,\cX')$ lower semicontinuity}\label{autocontinuity}

The following lemma is a refinement of Proposition \ref{MainLemma}(2) and hence its conclusion is another equivalent formulation of the AOCEA property. Recall that for a (measurable) partition $\pi = \{C_1, \dots, C_k\}$ of $\Omega$,  $$\E[X|\pi]:=\sum_{j=1}^k\frac{\E[X\one_{C_j}]}{\bP(C_j)}\one_{C_j},\quad\text{ for any }X\in L^1.$$

\begin{lemma}\label{improvedlemma}
Let $\cX$ be an r.i.\ space over a non-atomic probability space $(\Omega,\mathcal{F},\bP)$ other than $L^\infty$.
Suppose that $\mathcal{X}$ satisfies the AOCEA property.  Let $X\in \mathcal{X}$, a finite partition  $\pi = \{C_1, \dots, C_k\}$ of $\Omega$ and $\varepsilon>0$ be given. Then there exist  $X_1, \dots, X_N \in \mathcal{X}$ and a convex combination $X'=\sum_{i=1}^N\lambda_i X_i$ such that
$$ X_i \vert_{C_j} \sim X\vert_{C_j} \quad \text{for any } 1\leq i\leq N \text{ and } 1\leq j\leq k,$$
$$\mathrm{d}(X', \mathcal{X}_a)<\varepsilon.$$
In particular, $X_i\sim X$ for each $i=1,\dots,N$ and $\mathbb{E}[X'|\pi]=\mathbb{E}[X|\pi]$.
\end{lemma}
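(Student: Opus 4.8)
The plan is to reduce the simultaneous, per-cell equidistribution requirement to the localized form of the AOCEA property, namely Proposition~\ref{MainLemma}(2), applied separately on each cell $C_j$, and then to glue the local data across cells by a product (independence-type) averaging. Concretely, for each $j=1,\dots,k$ I would apply Proposition~\ref{MainLemma}(2) with $A=C_j$ to the random variable $X\one_{C_j}$, which is supported in $C_j$, and with tolerance $\varepsilon/k$. Since two random variables supported in $C_j$ have the same law on $\Omega$ precisely when they have the same law on $(C_j,\bP\vert_{C_j})$, this yields copies $X_1^{(j)},\dots,X_{k_j}^{(j)}$, all supported in $C_j$, with $X_i^{(j)}\vert_{C_j}\sim X\vert_{C_j}$, convex weights $(\lambda_i^{(j)})_i$, and $V_j\in\cX_a$ supported in $C_j$, such that, writing $Y_j:=\sum_{i=1}^{k_j}\lambda_i^{(j)}X_i^{(j)}$, one has $\norm{Y_j-V_j}<\varepsilon/k$. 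The fact that Proposition~\ref{MainLemma}(2) delivers copies and approximants \emph{supported in} $C_j$ is exactly what will permit the gluing across disjoint cells.

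The combinatorial heart is the product construction. Put $I=\prod_{j=1}^k\{1,\dots,k_j\}$; for a multi-index $\mathbf{i}=(i_1,\dots,i_k)\in I$ define the global copy $X_{\mathbf{i}}=\sum_{j=1}^k X_{i_j}^{(j)}$ (a genuine gluing, since the summands have disjoint supports) and the weight $\lambda_{\mathbf{i}}=\prod_{j=1}^k\lambda_{i_j}^{(j)}$. Then $\lambda_{\mathbf{i}}\geq 0$ and $\sum_{\mathbf{i}\in I}\lambda_{\mathbf{i}}=\prod_{j}\bigl(\sum_i\lambda_i^{(j)}\bigr)=1$, and $X_{\mathbf{i}}\vert_{C_j}=X_{i_j}^{(j)}\vert_{C_j}\sim X\vert_{C_j}$ for every $j$; relabelling $\{X_{\mathbf{i}}\}_{\mathbf{i}\in I}$ as $X_1,\dots,X_N$ with $N=\prod_j k_j$ gives the asserted copies. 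The decisive computation is that, interchanging the two sums and using that the weights on each factor sum to $1$, the global average collapses cell by cell:
\[
X':=\sum_{\mathbf{i}\in I}\lambda_{\mathbf{i}}X_{\mathbf{i}}
=\sum_{j=1}^k\sum_{i=1}^{k_j}\lambda_i^{(j)}X_i^{(j)}
=\sum_{j=1}^k Y_j.
\]

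Setting $V:=\sum_{j=1}^k V_j\in\cX_a$, I then estimate $\mathrm{d}(X',\cX_a)\leq\norm{X'-V}\leq\sum_{j=1}^k\norm{Y_j-V_j}<k\cdot(\varepsilon/k)=\varepsilon$, which is the required norm bound. The two ``in particular'' assertions follow at once: since $X_i\vert_{C_j}\sim X\vert_{C_j}$ on every cell of the partition $\pi$ and the cells have matching probabilities, the gluing fact from the Notation and Facts subsection gives $X_i\sim X$; and because equidistribution on $C_j$ forces $\E[X_i\one_{C_j}]=\E[X\one_{C_j}]$, each $X_i$, and hence the convex combination $X'$, has the same $\pi$-conditional expectation as $X$. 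I expect the only genuinely delicate point to be this product/telescoping step---selecting the index set and product weights so that simultaneous per-cell equidistribution is secured while the average still reduces to $\sum_j Y_j$; by contrast, the per-cell invocation of Proposition~\ref{MainLemma}(2) and the final norm estimate are routine once the supports have been arranged to be disjoint.
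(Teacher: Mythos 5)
Your proof is correct and takes essentially the same approach as the paper's: the paper also applies Proposition~\ref{MainLemma}(2) separately on each cell (with tolerance $\varepsilon/k$, demonstrated for $k=2$) and then glues via the identical product-weight construction, $X_{ij}=X_i'+X_j''$ with weights $\alpha_i\beta_j$, whose average collapses to $\sum_i\alpha_iX_i'+\sum_j\beta_jX_j''$ exactly as in your telescoping step. Your write-up simply carries out the general-$k$ multi-index bookkeeping that the paper leaves implicit.
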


\begin{proof}
We demonstrate the proof for $k=2$; the same argument  applies for other values of $k$. Write $\pi=\{A,B\}$. Let $X\in\cX$ and $\varepsilon>0$ be given. By Proposition \ref{MainLemma}(2),   there exist random variables $(X_i')_{i=1}^m$, a convex combination $\sum_{i=1}^m\alpha_iX_i'$, and a random variable $V'\in\cX_a$  such that all $X_i'$'s are supported in $A$, $X_i'\sim X\mathbf{1}_A$ for each $i=1,\dots,m$, and $$\Bignorm{\sum_{i=1}^m\alpha_iX_i'-V'}<\frac{\varepsilon}{2}.$$
Similarly, there exist random variables $(X_j'')_{j=1}^l$, a convex combination $\sum_{j=1}^l\beta_jX_j''$, and a random variable $V''\in\cX_a$  such that  all $X_j''$'s are supported in $B$, $X_j''\sim X\mathbf{1}_B$ for each $j=1,\dots,l$, and   $$\Bignorm{\sum_{j=1}^l\beta_jX_j''-V''}<\frac{\varepsilon}{2}.$$
For any $i=1,\dots,m$ and $j=1,\dots,l$, put $X_{ij}=X_i'+X_j''$. Then $$X_{ij}\vert_{A}=X_i'\vert_{A}\sim X\vert_{A}\;\text{ and }\;X_{ij}\vert_{B} =X_j''\vert_{B}\sim X\vert_{B}.$$ This  implies in particular  that $X_{ij}\sim X$  and  $\E[X_{ij}\vert\pi]=\E[X\vert\pi]$.
The convex combination $X':=\sum_{1\leq i\leq m, 1\leq j\leq l}\alpha_i\beta_j X_{ij}=\sum_{i=1}^m\alpha_iX_i'+\sum_{j=1}^l\beta_jX_j''$  and the random variable $V:=V'+V''\in\cX_a$ clearly satisfy $\E[X'\vert\pi]=\E[X\vert \pi]$ and $$\norm{X'-V}<\varepsilon.$$
This completes the proof by reordering $X_{ij}$'s into $(X_i)_{i=1}^N$.
\end{proof}

The following lemma was proved for $L^\infty$ in \cite{Joui06} and for Orlicz spaces in \cite{GaoLeungMunariXanthos2017}. But new techniques are needed to extend it to general r.i.\ spaces.
\begin{lemma}\label{previous}
Let $\cX$ be an r.i.\ space over a non-atomic probability space $(\Omega,\mathcal{F},\bP)$ other than $L^\infty$.
Suppose that $\mathcal{X}$ satisfies the AOCEA property.   Let $\mathcal{C}$ be a norm-closed, convex, law-invariant set in $\cX$. Let $\pi = \{C_1, \cdots, C_k\}$ be a finite partition of $\Omega$.
Then $\mathbb{E}[X\vert \pi] \in \mathcal{C}$ for any $X\in \mathcal{C}$.
\end{lemma}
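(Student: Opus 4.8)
The plan is to establish that $\E[X|\pi] \in \mathcal{C}$ by approximating $\E[X|\pi]$ with convex combinations of equidistributed copies of $X$ that lie in $\mathcal{C}$, using Lemma~\ref{improvedlemma} as the engine. Since $\mathcal{C}$ is norm-closed, convex, and law-invariant, every element $X' = \sum_i \lambda_i X_i$ produced by Lemma~\ref{improvedlemma} already lies in $\mathcal{C}$: each $X_i \sim X \in \mathcal{C}$ so $X_i \in \mathcal{C}$ by law invariance, and the convex combination stays in $\mathcal{C}$ by convexity. The conditional expectation $\E[X|\pi]$ is preserved exactly by these combinations, that is $\E[X'|\pi] = \E[X|\pi]$, and $X'$ is within distance $\varepsilon$ of $\cX_a$. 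So the first step is to record that $\mathcal{C}$ contains a net (indexed by $\varepsilon \downarrow 0$, or by increasingly fine auxiliary partitions) of elements $X'$ with prescribed conditional expectation over $\pi$ and with $\mathrm{d}(X',\cX_a) \to 0$.

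The heart of the argument is to pass from ``$X'$ is close to $\cX_a$ while $\E[X'|\pi] = \E[X|\pi]$'' to ``$\E[X|\pi] \in \mathcal{C}$.'' The natural idea is to apply the conditional expectation operator $\E[\cdot|\pi]$ to $X'$. Since $\pi$ is a \emph{finite} partition, $\E[\cdot|\pi]$ maps into the finite-dimensional sublattice spanned by $\{\one_{C_j}\}$, and on this finite-dimensional space all norms are equivalent; moreover $\E[\cdot|\pi]$ is a positive contraction on $\cX$ by the averaging property together with the r.i.\ axioms. The key observation will be that $\E[\cdot|\pi]$ annihilates the almost-order-continuous error: if $V \in \cX_a$ then $\E[V|\pi]$ is a bounded (indeed $L^\infty$) function whose norm can be controlled, but more to the point, one should iterate or refine. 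The cleaner route is to choose at the outset an \emph{auxiliary} finite partition $\pi'$ refining $\pi$ so fine that $\E[\cdot|\pi']$ followed by restriction behaves well; alternatively, apply $\E[\cdot|\pi]$ directly to $X' = V + (X'-V)$ with $\|X'-V\| < \varepsilon$, giving
\[
\bignorm{\E[X|\pi] - \E[V|\pi]} = \bignorm{\E[X'-V|\pi]} \leq \norm{X'-V} < \varepsilon,
\]
using that $\E[\cdot|\pi]$ is a contraction. This shows $\E[X|\pi]$ is within $\varepsilon$ of $\E[V|\pi]$, and since $V \in \cX_a$ one has $\E[V|\pi] \in \cX_a$ as well, but this alone does not yet place $\E[X|\pi]$ inside $\mathcal{C}$.

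The main obstacle, and the step I would focus on, is closing the gap so that $\E[X|\pi]$ itself lands in $\mathcal{C}$ rather than merely near some $\cX_a$-element. The resolution I expect the authors to use is to apply $\E[\cdot|\pi]$ directly to the \emph{whole} convex combination $X' \in \mathcal{C}$ and then exploit that $\mathcal{C}$ is law-invariant and norm-closed to reabsorb the conditional expectation. Concretely, since $\E[X'|\pi] = \E[X|\pi]$ is a function constant on each atom $C_j$, and the error $X' - V$ is small, one wants to realize $\E[X|\pi]$ as a norm-limit of elements of $\mathcal{C}$. The contractivity estimate above, combined with a self-improving or limiting argument (letting $\varepsilon \to 0$ and using that $X' \in \mathcal{C}$ while $X' \to \E[X|\pi]$ in norm \emph{after} a further averaging that does not change the $\pi$-conditional expectation) should yield $\E[X|\pi] \in \overline{\mathcal{C}} = \mathcal{C}$. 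The delicate point is arranging that the approximating sequence consists of genuine members of $\mathcal{C}$ converging in norm to exactly $\E[X|\pi]$; I would secure this by averaging $X'$ over a group of measure-preserving rearrangements within each atom $C_j$ that fixes the $\pi$-conditional expectation, so that the averaged variable stays in $\mathcal{C}$ (law invariance plus convexity), keeps $\E[\cdot|\pi]$ fixed, and its oscillation shrinks, driving it in norm toward the constant-per-atom function $\E[X|\pi]$.
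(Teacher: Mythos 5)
Your opening step coincides with the paper's: invoke Lemma~\ref{improvedlemma} to produce $X'=\sum_i\lambda_iX_i$ with $\E[X'|\pi]=\E[X|\pi]$ and $\norm{X'-V}<\varepsilon$ for some $V\in\cX_a$, and note $X'\in\mathcal{C}$ by law invariance plus convexity. Your closing tool---averaging over measure-preserving rearrangements within each atom, then using norm-closedness of $\mathcal{C}$---is also the paper's. But the mechanism you give for why this averaging works contains a genuine gap. You claim that averaging $X'$ over within-atom rearrangements makes ``its oscillation shrink,'' driving the averages in $\cX$-norm to the piecewise-constant function $\E[X|\pi]$. That is unjustified and, in general, false: averages of equidistributed copies of a random variable need not converge in $\cX$-norm to its (conditional) mean. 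If they always did, every r.i.\ space would enjoy a ``collapse to the mean''-type property far stronger than AOCEA, whereas AOCEA only guarantees that suitable averages approach the subspace $\cX_a$---not constants. Note also that the rearranged copies are only equal \emph{in distribution} piecewise; there is no literal transport of function values between pieces, so no pointwise cancellation forces oscillation decay. Hence the limiting argument you sketch does not close. (A secondary issue: your contractivity claim $\bignorm{\E[X'-V|\pi]}\leq\norm{X'-V}$ does not follow from the paper's r.i.\ axioms, which do not include an interpolation-type property of the norm; what does follow, via $\norm{\cdot}_{L^1}\leq C'\norm{\cdot}$ and $\one\in\cX$, is boundedness of $\E[\cdot|\pi]$ with a constant depending on $\pi$, and that is what the paper uses.)

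The missing idea is to transfer the exact averaging identity from $X'$ to the $\cX_a$-approximant $V$, while keeping the rearrangements of $X'$ distributionally locked to those of $V$. Concretely, the paper first perturbs $V$ to a simple function constant on pieces of \emph{equal} measure inside each atom. For each pair of permutations $(\tau,\sigma)$ of the pieces it defines $V_{(\tau,\sigma)}$ by permuting the values of $V$, and builds $X_{(\tau,\sigma)}$ piecewise so that $X_{(\tau,\sigma)}\vert_{A_j}\sim X'\vert_{A_{\tau(j)}}$, etc. Two facts then do the work: (i) because $V$ takes constant values on equal-measure pieces, the permutation average satisfies the \emph{exact} identity $\frac{1}{m_1!m_2!}\sum_{\tau,\sigma}V_{(\tau,\sigma)}=\E[V|\pi]$---it is the average of the simple function $V$, not of $X'$, that equals a conditional expectation on the nose; and (ii) the coordinated construction gives $X_{(\tau,\sigma)}-V_{(\tau,\sigma)}\sim X'-V$, hence $\norm{X_{(\tau,\sigma)}-V_{(\tau,\sigma)}}=\norm{X'-V}<\varepsilon$ for every $(\tau,\sigma)$: the error never shrinks under averaging, it merely stays bounded by $\varepsilon$, which suffices. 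The permutation average of the $X_{(\tau,\sigma)}$ lies in $\mathcal{C}$, is within $\varepsilon$ of $\E[V|\pi]$, and $\bignorm{\E[V|\pi]-\E[X|\pi]}=\bignorm{\E[V-X'|\pi]}\leq C\norm{V-X'}<C\varepsilon$ with $C$ depending on $\pi$. Letting $\varepsilon\downarrow0$ and using closedness of $\mathcal{C}$ yields $\E[X|\pi]\in\mathcal{C}$. Without this transfer to $V$, your argument has no source of convergence toward $\E[X|\pi]$.
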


\begin{proof}
Let $X\in \mathcal{C}$ and $\varepsilon >0$ be given. We demonstrate the proof for $k=2$; the other cases can be proved similarly. Write $\pi = \{A, B\}$. We obtain the convex combination $X'=\sum_{i=1}^N\lambda_iX_i$ as in Lemma \ref{improvedlemma}. Then $$\mathbb{E}[X'\vert \pi] = \mathbb{E}[X\vert \pi]\quad\text{ and }\quad\norm{X'-V}<\varepsilon$$ for some $V\in \mathcal{X}_a$.  For each $i$, since $X_i\sim X$, $X_i\in\mathcal{C}$ by law-invariance of $\mathcal{C}$. Thus $X'\in\mathcal{C}$ by convexity of $\mathcal{C}$.
Recall that simple functions are norm dense in $\mathcal{X}_a$. Thus we may assume that $V$ is simple. Since $\lim_{\mathbb{P}(C)\rightarrow 0} \norm{\mathbf{1}_C}=0$, we may further assume, by perturbation if necessary, that
$$V\mathbf{1}_A = \sum_{j=1}^{m_1} a_j \mathbf{1}_{A_j},  \text{ where } \mathbb{P}(A_j) = \frac{\mathbb{P}(A)}{m_1}\text{ for }1\leq j\leq m_1,$$
$$V\mathbf{1}_B = \sum_{l=1}^{m_2} b_l \mathbf{1}_{B_l},  \text{ where } \mathbb{P}(B_l) = \frac{\mathbb{P}(B)}{m_2}\text{ for }1\leq l\leq m_2,$$
where $A_j$'s and $B_l$'s form a partition of $A$ and $B$, respectively.
For any permutation $\tau$ on $ \{1, \cdots,m_1\}$ and any permutation $\sigma$ on $  \{ 1, \cdots, m_2\}$, let $V_{(\tau,\sigma)}$ be the random variable defined by
$$V_{(\tau, \sigma)}\vert_{A_j} = a_{\tau(j)}, \quad j = 1, \cdots, m_1,$$
$$V_{(\tau, \sigma)}\vert_{B_l} = b_{\sigma(l)}, \quad l = 1, \cdots, m_2,$$
Using non-atomicity and the fact that $\bP(A_j)=\bP(A_{\tau(j)})$ and $\bP(B_l)=\bP(B_{\sigma(l)})$, we can also find a random variable $X_{(\tau,\sigma)}$  such that
$$X_{(\tau, \sigma)}\vert_{A_j} \sim X'\vert_{A_{\tau(j)}}, \quad j = 1, \cdots, m_1$$
$$X_{(\tau, \sigma)}\vert_{B_l} \sim X'\vert_{B_{\sigma(l)}}, \quad l = 1, \cdots, m_2.$$
Clearly, $X_{(\tau,\sigma)}\sim X'$ so that $X_{(\tau,\sigma)}\in\mathcal{C}$. By convexity of $\mathcal{C}$, $$\frac{1}{m_1!m_2!}\sum_{\tau, \sigma} X_{(\tau, \sigma)} \in \mathcal{C}.$$

Moreover, $X_{(\tau, \sigma)}\vert_{A_j} - V_{(\tau, \sigma)}\vert_{A_j} = X_{(\tau, \sigma)}\vert_{A_j}-a_{\tau(j)} \sim X'\vert_{A_{\tau(j)}}-a_{\tau(j)} = X'\vert_{A_{\tau(j)}}- V\vert_{A_{\tau(j)}}$.
Similarly,  $X_{(\tau, \sigma)}\vert_{B_l} - V_{(\tau, \sigma)}\vert_{B_l} \sim X'\vert_{B_{\sigma(l)}}- V\vert_{B_{\sigma(l)}}$. Thus
\begin{align}\label{swaptogether}
X_{(\tau, \sigma)}-V_{(\tau, \sigma)}\sim X'-V.\end{align}
In particular, $\norm{X_{(\tau, \sigma)}-V_{(\tau, \sigma)}} = \norm{X'-V}<\varepsilon$, so that
\begin{align*}
 \Bignorm{\frac{1}{m_1!m_2!}\sum_{\tau, \sigma} X_{(\tau, \sigma)}-\frac{1}{m_1!m_2!}\sum_{\tau, \sigma} V_{(\tau, \sigma)}}
 \leq  \frac{1}{m_1!m_2!}\sum_{\tau, \sigma} \norm{X_{(\tau, \sigma)}-V_{(\tau, \sigma)}}<\varepsilon.  \end{align*}
Furthermore,
\begin{align*}
    \frac{1}{m_1!m_2!}\sum_{\tau, \sigma} V_{(\tau, \sigma)} = \mathbb{E}[V\vert \pi]
\end{align*}
and
\begin{align*}
\bignorm{\mathbb{E}[V\vert \pi]-\mathbb{E}[X'\vert \pi]}
=& \Bignorm{\frac{1}{\bP(A)}\E[(V-X')\mathbf{1}_A]\mathbf{1}_A+ \frac{1}{\bP(B)}\E[(V-X')\mathbf{1}_B]\mathbf{1}_B}  \\
\leq &\Big(\frac{1}{\mathbb{P}(A)} + \frac{1}{\mathbb{P}(B)}\Big) \norm{V-X'}_{L^1}\norm{\mathbf{1}} \leq C \norm{V-X'} <C\varepsilon,
\end{align*}
where $C$ is a constant depending only on $\cX$ and $\pi$; cf.\ \eqref{l1norm}.
Hence, in view of $\mathbb{E}[X'\vert \pi] = \mathbb{E}[X\vert \pi]$, we have
\begin{align*}
    &\,\Bignorm{\frac{1}{m_1!m_2!}\sum_{\tau, \sigma} X_{(\tau, \sigma)}-\mathbb{E}[X\vert \pi]} \\
    \leq &\,\Bignorm{\frac{1}{m_1!m_2!}\sum_{\tau, \sigma} X_{(\tau, \sigma)}-\frac{1}{m_1!m_2!}\sum_{\tau, \sigma} V_{(\tau, \sigma)}} +\Bignorm{\frac{1}{m_1!m_2!}\sum_{\tau, \sigma} V_{(\tau, \sigma)}-\mathbb{E}[V\vert \pi]}\\
    &\quad+ \norm{\mathbb{E}[V\vert \pi]-\mathbb{E}[X'\vert \pi]} \\ \leq&\, (C+1)\varepsilon.
\end{align*}
Since $\frac{1}{m_1!m_2!}\sum_{\tau, \sigma} X_{(\tau, \sigma)} \in \mathcal{C}$ and $\mathcal{C} $ is norm closed, we get $\mathbb{E}[X\vert \pi]\in \mathcal{C}$.
\end{proof}

The critical idea in the proof is that while we easily swap $V$ around to average to $\E[V|\pi]$,  we need to swap $X$ in a way that  the key property \eqref{swaptogether} is maintained.

The proof of the theorem below is standard, once  one is furnished with Lemma \ref{previous}. We include a proof for the sake of completeness.
\begin{theorem}\label{autocont}
Let $\cX$ be an r.i.\ space over a non-atomic probability space $(\Omega,\mathcal{F},\bP)$ other than $L^\infty$.
Suppose that $\mathcal{X}$ satisfies the AOCEA property.   Let $\rho: \cX\rightarrow \mathbb{R}$ be convex,  decreasing, and law invariant. Then $\rho$ is $\sigma(\cX, \cX')$ lower semicontinuous  at every $X\in \cX$ such that $X^- \in \cX_a$.
\end{theorem}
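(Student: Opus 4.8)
The plan is to deduce $\sigma(\cX,\cX')$ lower semicontinuity from the order (Fatou) structure already established, using Lemma~\ref{previous} as the engine that lets conditioning on finite partitions stay inside law-invariant closed sets. First I would fix $X\in\cX$ with $X^-\in\cX_a$ and a real number $\lambda$ with $\rho(X)>\lambda$, and set $\mathcal{C}=\{\rho\leq\lambda\}$. Since $\rho$ is convex and decreasing, and by \cite[Proposition 3.1]{RS06} norm continuous, $\mathcal{C}$ is a norm-closed, convex, law-invariant set (law-invariance of $\mathcal{C}$ coming directly from law-invariance of $\rho$); the goal is to show $\mathcal{C}$ is $\sigma(\cX,\cX')$-closed, or more precisely that $X$ has a $\sigma(\cX,\cX')$-neighborhood disjoint from $\mathcal{C}$, which is exactly the neighborhood-of-sublevel-set formulation of lower semicontinuity at $X$ recorded in the Notation subsection.

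The standard route is a separation argument. The key reduction is that a convex set is $\sigma(\cX,\cX')$-closed if and only if it is closed under conditional expectations $\E[\,\cdot\,|\pi]$ over finite partitions together with the appropriate ``truncation/order'' closedness; Lemma~\ref{previous} supplies precisely the first half, giving $\E[Y|\pi]\in\mathcal{C}$ for every $Y\in\mathcal{C}$ and every finite partition $\pi$. Concretely I would argue by contradiction: suppose $X\notin\mathcal{C}$ but $X$ lies in the $\sigma(\cX,\cX')$-closure of $\mathcal{C}$. Using that $\cX'$ is generated, in an appropriate sense, by bounded functions measurable with respect to finite partitions (simple functions lie in $L^\infty\subset\cX'$ and are $\sigma(\cX,\cX')$-determining), one approximates the relevant weak neighborhood by partition conditioning: any weak-neighborhood separation reduces to testing against finitely many $Y_1,\dots,Y_m\in\cX'$, and by approximating these by partition-measurable simple functions one replaces the test by conditioning $\E[\,\cdot\,|\pi]$ for a sufficiently fine finite partition $\pi$. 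Then Lemma~\ref{previous} forces $\E[X|\pi]$ to be a limit of elements of $\mathcal{C}$ under conditioning and hence, by norm-closedness, to remain controllable; combined with the fact that $\E[X|\pi]\to X$ in $\sigma(\cX,\cX')$ as $\pi$ refines (a martingale-convergence/density fact, using $X^-\in\cX_a$ to control the negative tail in norm), one contradicts $X\notin\mathcal{C}$.

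Alternatively, and perhaps more cleanly, I would invoke Hahn--Banach directly: since $X\notin\mathcal{C}$ and $\mathcal{C}$ is norm-closed and convex, there is $f\in\cX^*$ strictly separating $X$ from $\mathcal{C}$, and the task is to upgrade $f$ to lie in $\cX'$ (equivalently, the order-continuous dual $\cX_n^\sim$). The mechanism is that $\mathcal{C}$ is law-invariant and closed under $\E[\,\cdot\,|\pi]$ by Lemma~\ref{previous}, so the separating functional can be replaced by its ``symmetrization'' $\E[f|\pi]$-type averages, whose singular (non-order-continuous) part must vanish because conditioning annihilates the finitely-additive singular component; the condition $X^-\in\cX_a$ is what guarantees that the singular part of $f$ does not obstruct separation at $X$ itself. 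This is the step I expect to be the main obstacle: cleanly showing that the separating functional can be taken in $\cX'$ rather than merely in $\cX^*$, i.e.\ that law-invariance plus partition-closedness kills the singular part of the functional exactly at points with order-continuous negative tail. Everything else---convexity, the sublevel-set characterization, norm-closedness---is routine, so I would concentrate the technical effort on this order-continuity upgrade of the separating functional, using $X^-\in\cX_a$ and the $L^\infty\subset\cX_a\subset\cX$ inclusion as the essential leverage.
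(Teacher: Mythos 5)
Your main (first) argument has a genuine gap, and it sits exactly where the theorem's hypotheses must do their work. You correctly take $\mathcal{C}=\{\rho\leq\lambda\}$ with $\lambda<\rho(X)$, note it is norm-closed, convex and law-invariant, and (via the fact that $X_\alpha\to X$ in $\sigma(\cX,\cX')$ forces $\E[X_\alpha\vert\pi]\to\E[X\vert\pi]$ in norm for a fixed finite partition $\pi$, plus Lemma~\ref{previous} and norm-closedness) you can legitimately conclude $\E[X\vert\pi]\in\mathcal{C}$. But you then try to contradict $X\notin\mathcal{C}$ using the fact that $\E[X\vert\pi]\to X$ in $\sigma(\cX,\cX')$ as $\pi$ refines. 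This is circular: weak convergence of elements of $\mathcal{C}$ to a point outside $\mathcal{C}$ contradicts nothing unless $\mathcal{C}$ is already known to be $\sigma(\cX,\cX')$-closed near $X$, which is precisely what is being proved. Relatedly, the ``key reduction'' you invoke (weak closedness of a convex set is equivalent to closedness under finite-partition conditioning plus order-type closedness) is not a citable fact; it is essentially the content of the theorem. A sanity check confirms something is wrong: the assumption $X^-\in\cX_a$ does no essential work in your argument (you use it only to justify weak martingale convergence), so if the argument were valid it would give weak lower semicontinuity at \emph{every} $X$ in every AOCEA space, contradicting Example~\ref{counterexam}, which produces failures at all $X$ with $X^-\notin\cX_a$ even in Orlicz spaces.

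The repair is to replace weak convergence of the conditional expectations by \emph{order} convergence and to invoke the automatic Fatou property: by \cite[Lemma 2]{FR20} there are finite partitions $\pi_k$ with $\E[X\vert\pi_k]\xrightarrow{o}X$, and Theorem~\ref{main-thm} (this is where AOCEA and $X^-\in\cX_a$ actually enter) gives
\[
\rho(X)\leq\liminf_k\rho(\E[X\vert\pi_k])\leq\lambda<\rho(X),
\]
a genuine contradiction, since each $\E[X\vert\pi_k]\in\mathcal{C}$. With this correction your argument becomes exactly the paper's proof, which runs the same ingredients directly rather than by contradiction: for a net $X_\alpha\to X$ in $\sigma(\cX,\cX')$, Lemma~\ref{previous} applied to $\{\rho\leq\rho(X_\alpha)\}$ gives $\rho(\E[X_\alpha\vert\pi_k])\leq\rho(X_\alpha)$; norm continuity of $\rho$ (\cite[Proposition 3.1]{RS06}) and the norm convergence $\E[X_\alpha\vert\pi_k]\to\E[X\vert\pi_k]$ yield $\rho(\E[X\vert\pi_k])\leq\liminf_\alpha\rho(X_\alpha)$; the Fatou property at $X$ then finishes. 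Your second, ``alternative'' route (separate $X$ from $\mathcal{C}$ by some $f\in\cX^*$, then kill the singular part of $f$) is not carried out, and you acknowledge its key step is the whole difficulty; note in particular that $\E[X\vert\pi]\to X$ need not hold in norm (only $X^-$, not $X^+$, lies in $\cX_a$), so one cannot simply pass the separation inequality from $f$ to the partition-conditioned functionals $f(\E[\cdot\vert\pi])$.
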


\begin{proof}
Let $X\in \cX$ be such that $X^- \in \cX_a$. By \cite[Lemma 2]{FR20},  there is a sequence $(\pi_k)$ of finite partitions of $\Omega$  such that
$\mathbb{E}[X\vert \pi_k]\xrightarrow{o} X$ in $\cX$.
Since $\rho$ has the  Fatou property at $X$ by Theorem \ref{main-thm}, $$ \rho(X)\leq \liminf_k \rho(\mathbb{E}[X\vert \pi_k]).$$

Let $X_\alpha\xrightarrow{\sigma(\cX, \cX')}X$. For any $A\in\mathcal{F}$, since $\one_A\in L^\infty\subset \cX'$, $\E[X_\alpha\one_A]\rightarrow\E[X\one_A]$. Thus by the definition of $\E[\cdot\vert \pi]$, one sees that for any $ k\in\mathbb{N}$,  $\mathbb{E}[X_\alpha\vert \pi_k]\rightarrow \mathbb{E}[X\vert \pi_k]$ in the norm  topology of $\cX$. Recall from \cite[Proposition 3.1]{RS06} that $\rho$ is norm continuous. It follows that
$$\rho(\mathbb{E}[X|\pi_k]) = \lim_\alpha \rho(\mathbb{E}[X_\alpha\vert \pi_k]).$$
For each $\alpha$, the set $\{\rho\leq \rho(X_\alpha)\}$ is norm closed, by norm continuity of $\rho$ again. Since it is also convex and law invariant and contains $X_\alpha$,  Lemma \ref{previous}  implies that $\mathbb{E}[X_\alpha\vert \pi_k]\in \{\rho\leq \rho(X_\alpha)\}$, i.e., $\rho(\mathbb{E}[X_\alpha\vert \pi_k])\leq \rho (X_\alpha)$ for any $k\geq 1$. Therefore,
$$\rho(\mathbb{E}[X|\pi_k]) = \lim_\alpha \rho(\mathbb{E}[X_\alpha\vert \pi_k])\leq \liminf_\alpha \rho(X_\alpha).$$
Taking $\liminf$ over  $k$, we obtain $\rho(X)\leq \liminf_\alpha \rho(X_\alpha) $.
\end{proof}

\subsection{Automatic Dual Representations}
The well-known Fenchel-Moreau Duality asserts that if $\rho:\cX\rightarrow(-\infty,\infty]$ is proper (i.e., not identically $\infty$), convex, and $\sigma(\cX,\cX')$ lower semicontinuous everywhere, then $\rho$ has a dual representation via the dual space $\cX'$ at every $X\in\cX$. In our framework, $\rho$, however,  only has $\sigma(\cX,\cX')$ lower semicontinuity locally, not everywhere, and as a result, the Fenchel-Moreau Duality cannot be  applied directly. Fortunately, the classical proof in \cite[Section 1.4]{B} can be  modified to recover the dual representation theorem locally. We include the complete proof here for the convenience of the reader.

Let $\cY$ be a locally convex topological vector space  and let $\cY^*$ be its continuous dual.
Let $\rho: \cY\to (-\infty,\infty]$ be a proper, convex functional.  Define the conjugate functional $\rho^*:\cY^*\to (-\infty,\infty]$ by
\[ \rho^*(F) = \sup_{Y\in \cY}\big(F(Y) - \rho(Y)\big),\quad F\in\cY^*.\]

\begin{lemma}\label{rep-lemma1}
Suppose that $\rho:\cY\to \R$ is convex and (topologically) lower semicontinuous at $Y_0\in \cY$.  For any real number $\lambda_0 < \rho(Y_0)$, there exist $F\in \cY^*$ and a real number $k >0$ such that
\[ F(Y_0) + k\lambda_0 < \inf\big\{F(Y)+k\rho(Y): Y \in \cY\big\}.\]
\end{lemma}

\begin{proof}
Choose $\lambda\in\R$ such that $\lambda_0 < \lambda < \rho(Y_0)$.
Since $\rho$ is lower semicontinuous at $Y_0$ and $\cY$ is locally convex, there exists a convex open neighborhood $\mathcal{O}$ of $Y_0$ such that $\mathcal{O} \subseteq \{\rho > \lambda\}$.
Let $\mathcal{A} = \mathcal{O} \times (-\infty,\lambda)$. Then $\mathcal{A}$ is an open  convex set in $\cY\times \R$. It is disjoint with the convex set $\mathcal{C}_\rho:=\{(Y,\mu)\in\cY\times\R: \rho(Y)\leq \mu\}$.
By the Hahn-Banach Separation Theorem (\cite[Theorem 3.4]{Rudin}), there exist a nonzero linear functional $(F,k)\in (\cY\times \R)^*=\cY^*\times\R$ and $\alpha\in\R$ such that
\begin{align}\nonumber &\sup\{F(Y) + k\mu: Y\in \mathcal{O}, \mu\in\R, \mu< \lambda\} \\
\label{e4}\leq &\alpha \leq \inf\{F(Y) + k\mu: Y\in\cY,   \mu \in\R, \mu \geq\rho(Y) \}.\end{align}

Fix $Y\in \mathcal{O}$.  By the first inequality in \eqref{e4},  $F(Y) + k\mu \leq \alpha$ for all $\mu < \lambda$. Hence $k \geq 0$.
If $k=0$, then $F\neq 0$.
Since $\rho$ is real-valued, $(Y,\rho(Y)) \in \mathcal{C}_\rho$ for all $Y \in \cX$. Thus the second inequality in \eqref{e4} implies  that $\inf\{F(Y): Y\in \cY\} \geq \alpha$. This is impossible, since $F$ is linear and nonzero.  Thus $k >0$.
Choose sufficiently small $\varepsilon >0$ such that $\lambda_0 +\varepsilon < \lambda$.
Then
$( Y_0 ,\lambda_0+\varepsilon) \in \mathcal{O}\times (-\infty,\lambda)$.
Hence
\begin{align*}
&F(Y_0) +k\lambda_0 < F(Y_0) + k(\lambda_0 + \varepsilon) \leq \alpha\\
\leq& \inf\{F(Y) + k\mu: Y\in\cY,   \mu \in\R, \mu \geq\rho(Y)\}=\inf\{F(Y)+k\rho(Y): Y\in \cY\}.
\end{align*}
\end{proof}

\begin{lemma} \label{rep-lemma2}
If $\rho:\cY\to \R$ is convex and lower semicontinuous at some $Y_0\in \cY$, then $\rho^*$ is proper.
\end{lemma}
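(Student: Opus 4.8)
The plan is to read off properness directly from the separation inequality supplied by Lemma \ref{rep-lemma1}. Recall that $\rho^*$ is proper precisely when it is not identically $+\infty$. Since $\rho$ is real-valued, for every $F\in\cY^*$ we have $\rho^*(F)\geq F(Y_0)-\rho(Y_0)>-\infty$, so $\rho^*$ automatically avoids the value $-\infty$; thus the entire task reduces to exhibiting a single functional $G\in\cY^*$ with $\rho^*(G)<\infty$.

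First I would invoke Lemma \ref{rep-lemma1} with any fixed real number $\lambda_0<\rho(Y_0)$, which is legitimate because $\rho$ is assumed convex and lower semicontinuous at $Y_0$. This produces $F\in\cY^*$ and a real number $k>0$ such that
\[
F(Y)+k\rho(Y)\;\geq\; m:=F(Y_0)+k\lambda_0\qquad\text{for all }Y\in\cY.
\]
Here $m$ is a fixed finite lower bound.

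The key observation is that, after dividing by $k>0$ and rearranging, this inequality says exactly that the rescaled functional $G:=-\tfrac{1}{k}F$ has a finite conjugate value. Indeed, since $G\in\cY^*$ and
\[
G(Y)-\rho(Y)\;=\;-\frac{F(Y)}{k}-\rho(Y)\;\leq\;-\frac{m}{k}\qquad\text{for all }Y\in\cY,
\]
taking the supremum over $Y$ yields $\rho^*(G)\leq -m/k<\infty$. Hence $\rho^*$ is not identically $+\infty$, i.e., $\rho^*$ is proper.

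There is no genuine obstacle here: the content of properness is entirely packaged inside Lemma \ref{rep-lemma1}, and the only point requiring care is the sign and scaling bookkeeping, namely recognizing that the separation inequality $F(Y)+k\rho(Y)\geq m$ is equivalent to the finiteness of $\rho^*$ at $-F/k$ rather than at $F$ itself. The positivity $k>0$ from Lemma \ref{rep-lemma1} is what makes this division and the resulting membership $G\in\cY^*$ valid.
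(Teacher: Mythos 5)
Your proof is correct and follows essentially the same route as the paper: apply Lemma \ref{rep-lemma1} with some $\lambda_0<\rho(Y_0)$ to get $F\in\cY^*$ and $k>0$ making $\inf_Y\bigl(F(Y)+k\rho(Y)\bigr)$ finite, then observe that this says precisely $\rho^*\bigl(-\tfrac{F}{k}\bigr)<\infty$. The only cosmetic difference is that the paper takes $m$ to be the infimum itself while you take the lower bound $F(Y_0)+k\lambda_0$; the argument is otherwise identical.
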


\begin{proof}(Modified from \cite{B}).
Choose $\lambda_0\in \R$ such that $\lambda_0 < \rho(Y_0)$.
By  Lemma \ref{rep-lemma1},
there exist  $F\in \cY^*$ and  $k >0$ such that
\[m:= \inf\{F(Y) + k\rho(Y): Y \in \cY\} \in\R.\]
Hence
\[ \frac{-F}{k}(Y) -\rho(Y) \leq \frac{-m}{k} \text{ for all $Y \in \cY$}.\]
By the definition of $\rho^*$, it follows that  $\rho^*(\frac{-F}{k}) < \infty$.
\end{proof}

Now, let $\cY$ be a vector space and $\cY^\#$ be a vector space of linear functionals on $\cY$ separating points of $\cY$. Then $\cY$ with the topology $\sigma(\cY,\cY^\#)$ is a locally convex topological vector space, and its continuous dual $\cY^*$ is just $\cY^\#$.
Assume that $\rho:\cY\to \R$ is  convex and $\sigma(\cY,\cY^\#)$ lower semicontinuous at some $Y_0\in \cY$.  By Lemma \ref{rep-lemma2}, $\rho^*$ is proper.  We can thus define $\rho^{**}:\cY\to (-\infty,\infty]$ by
\[ \rho^{**}(Y) = \sup_{F\in \cY^\#}\big(F(Y) -\rho^*(F)\big).
\]

\begin{proposition}\label{general-rep}
If $\rho:\cY\to \R$ is  convex and $\sigma(\cY,\cY^\#)$ lower semicontinuous  at $Y_0\in\cY$, then $\rho^{**}(Y_0) = \rho(Y_0)$.
\end{proposition}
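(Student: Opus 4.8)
The plan is to establish the two inequalities $\rho^{**}(Y_0)\le \rho(Y_0)$ and $\rho^{**}(Y_0)\ge \rho(Y_0)$ separately. The first is the easy ``weak duality'' direction, and it holds at every point regardless of lower semicontinuity: for any $F\in\cY^\#$ and any $Y\in\cY$, the definition of $\rho^*$ gives $\rho^*(F)\ge F(Y)-\rho(Y)$, so that $F(Y)-\rho^*(F)\le \rho(Y)$. Taking the supremum over $F\in\cY^\#$ yields $\rho^{**}(Y)\le \rho(Y)$ for all $Y$, in particular at $Y_0$.

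For the reverse inequality I would argue by contradiction, invoking Lemma \ref{rep-lemma1} at the point $Y_0$. Suppose $\rho^{**}(Y_0)<\rho(Y_0)$ and pick a real number $\lambda_0$ with $\rho^{**}(Y_0)<\lambda_0<\rho(Y_0)$. Since $\rho$ is convex and $\sigma(\cY,\cY^\#)$ lower semicontinuous at $Y_0$, and since the continuous dual of $\cY$ under $\sigma(\cY,\cY^\#)$ is exactly $\cY^\#$, Lemma \ref{rep-lemma1} furnishes $F\in\cY^\#$ and $k>0$ with
\[ F(Y_0)+k\lambda_0 < \inf\{F(Y)+k\rho(Y):Y\in\cY\}.\]
Dividing the resulting inequality $F(Y_0)+k\lambda_0<F(Y)+k\rho(Y)$ (valid for every $Y$) by $k>0$ and rearranging, I obtain
\[ \tfrac{-F}{k}(Y)-\rho(Y) < \tfrac{-F}{k}(Y_0)-\lambda_0 \quad\text{for all }Y\in\cY.\]
Taking the supremum over $Y$ on the left and using the definition of $\rho^*$ gives $\rho^*(\tfrac{-F}{k})\le \tfrac{-F}{k}(Y_0)-\lambda_0$, whence $\tfrac{-F}{k}(Y_0)-\rho^*(\tfrac{-F}{k})\ge \lambda_0$. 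As $\cY^\#$ is a vector space, $\tfrac{-F}{k}\in\cY^\#$, so the definition of $\rho^{**}$ yields $\rho^{**}(Y_0)\ge\lambda_0>\rho^{**}(Y_0)$, a contradiction. Hence $\rho^{**}(Y_0)\ge\rho(Y_0)$, and combining with weak duality finishes the proof.

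As for obstacles, there is essentially no hard analytic content here: the whole statement is a single-point version of the standard Fenchel--Moreau biconjugation argument, and Lemma \ref{rep-lemma1} has already packaged the one genuinely nontrivial ingredient, namely the Hahn--Banach separation that produces a \emph{strictly} positive multiplier $k$. The two points requiring care are bookkeeping rather than difficulty. First, the argument divides by $k$, so it is essential that Lemma \ref{rep-lemma1} delivers $k>0$ rather than merely $k\ge0$. Second, the candidate functional fed into $\rho^{**}$ must actually belong to $\cY^\#$, which is why it matters both that $\cY^\#$ is a linear subspace and that it coincides with the topological dual $\cY^*$ appearing in Lemma \ref{rep-lemma1}. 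With these two observations in hand the argument is purely formal.
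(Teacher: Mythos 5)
Your proof is correct and follows essentially the same route as the paper: both establish weak duality from the definitions and then derive a contradiction from Lemma \ref{rep-lemma1} by passing to the functional $\tfrac{-F}{k}$ and comparing with $\rho^*$ and $\rho^{**}$. The only cosmetic difference is that you insert a strictly intermediate $\lambda_0$ with $\rho^{**}(Y_0)<\lambda_0<\rho(Y_0)$, whereas the paper applies the lemma directly with $\lambda_0=\rho^{**}(Y_0)$ (which is finite since $\rho^*$ is proper by Lemma \ref{rep-lemma2}); both versions are fine.
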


\begin{proof} (Modified from \cite{B}).
By the definitions of $\rho^*$ and $\rho^{**}$, it is clear that $\rho^{**}(Y_0) \leq \rho(Y_0)$.
Assume by way of contradiction that $\rho^{**}(Y_0) <\rho(Y_0)$.
By  Lemma \ref{rep-lemma1}, there are $F\in \cY^\# $ and $k>0$ such that
\[ F(Y_0) + k \rho^{**}(Y_0) < \inf\{F(Y) + k\rho(Y): Y \in \cY\}.\]
Then
\begin{align}\label{firstin}
     \frac{-F}{k}(Y_0) - \rho^{**}(Y_0) > \sup\left\{\frac{-F}{k} (Y ) -\rho(Y):Y \in \cY\right\} = \rho^*\Big(\frac{-F}{k}\Big).
\end{align}
In particular, $\rho^*\Big(\frac{-F}{k}\Big) < \infty$.
But then by \eqref{firstin},
\[ \rho^{**}(Y_0) < \frac{-F}{k}(Y_0) - \rho^*\Big(\frac{-F}{k}\Big),
\]
contradicting the definition of $\rho^{**}(Y_0)$.
\end{proof}

Taking $\cY=\cX$ and $\cY^\#=\cX'$ in Proposition \ref{general-rep} and applying Theorem \ref{autocont}, we obtain the following automatic dual representation theorem.
\begin{theorem}\label{dualrep}
Let $\cX$ be an r.i.\ space over a non-atomic probability space $(\Omega,\mathcal{F},\bP)$ other than $L^\infty$.
Suppose that $\mathcal{X}$ satisfies the AOCEA property.  Let $\rho: \cX\rightarrow \mathbb{R}$ be convex, decreasing, and law invariant. Then  for any $X\in \cX$ with $X^- \in \cX_a$,
\begin{align}\label{rep-formula}
    \rho(X)=\sup_{Y\in\cX'}\big(\E[XY]-\rho^*(Y)\big),
\end{align}where $\rho^*(Y)=\sup_{X\in\cX}(\E[XY]-\rho(X))$ for any $Y\in\cX'$.
\end{theorem}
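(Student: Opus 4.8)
The plan is to deduce the representation directly from the abstract local Fenchel--Moreau result in Proposition \ref{general-rep}, specialized to the natural pairing $\langle X,Y\rangle=\E[XY]$ between $\cX$ and its associate space $\cX'$. First I would verify that $\cX'$ is a separating space of linear functionals on $\cX$, so that $(\cX,\sigma(\cX,\cX'))$ is a locally convex space whose continuous dual is exactly $\cX'$, as required to instantiate the framework preceding the theorem. This is immediate: since $\cX\subseteq L^1$ and $L^\infty\subseteq \cX'$, and the indicators $\one_A$ already separate points of $L^1$, the functionals $X\mapsto \E[XY]$ with $Y\in\cX'$ separate points of $\cX$.

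Next I would identify the abstract conjugates with the concrete ones in the statement. Taking $\cY=\cX$ and $\cY^\#=\cX'$, each $Y\in\cX'$ is viewed as the functional $X\mapsto\E[XY]$ on $\cX$; under this identification the abstract conjugate $\rho^*(F)=\sup_{X}\big(F(X)-\rho(X)\big)$ becomes precisely $\rho^*(Y)=\sup_{X\in\cX}\big(\E[XY]-\rho(X)\big)$, and the biconjugate $\rho^{**}(X)=\sup_{F}\big(F(X)-\rho^*(F)\big)$ becomes the right-hand side $\sup_{Y\in\cX'}\big(\E[XY]-\rho^*(Y)\big)$ of \eqref{rep-formula}. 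This bookkeeping is routine but must be spelled out so that the two notions of $\rho^*$ coincide.

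The key input is the local lower semicontinuity. Fix $X\in\cX$ with $X^-\in\cX_a$. Since $\rho$ is convex, decreasing and law invariant and $\cX$ enjoys the AOCEA property, Theorem \ref{autocont} guarantees that $\rho$ is $\sigma(\cX,\cX')$ lower semicontinuous at this particular $X$. Proposition \ref{general-rep}, applied with $Y_0=X$, then yields $\rho^{**}(X)=\rho(X)$, which is exactly \eqref{rep-formula}. I would also remark that properness of $\rho^*$, needed for $\rho^{**}$ to be well-defined, is furnished by Lemma \ref{rep-lemma2} together with this same lower semicontinuity at $X$.

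I expect no serious obstacle: all the substantive analysis---automatic $\sigma(\cX,\cX')$ lower semicontinuity derived from the AOCEA property---is already encapsulated in Theorem \ref{autocont}, and the local Fenchel--Moreau duality step is encapsulated in Proposition \ref{general-rep}. The only points demanding care are the two verifications above, namely that $\cX'$ separates points of $\cX$ and that the abstract and concrete conjugate functionals agree under the pairing $\langle X,Y\rangle=\E[XY]$; once these are in place the proof is a one-line invocation of the two results.
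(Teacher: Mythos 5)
Your proposal is correct and matches the paper's own proof, which likewise obtains Theorem \ref{dualrep} by taking $\cY=\cX$ and $\cY^\#=\cX'$ in Proposition \ref{general-rep} and invoking Theorem \ref{autocont} for the local $\sigma(\cX,\cX')$ lower semicontinuity at $X$. The two verifications you flag (that $\cX'$ separates points of $\cX$, and that the abstract and concrete conjugates coincide under the pairing $\E[XY]$) are left implicit in the paper but are exactly the right bookkeeping.
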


We end the paper with two more equivalent formulations of the AOCEA property.

\begin{remark}
It is well known that if $\rho$ has the dual representation \eqref{rep-formula} at some $X\in\cX$, then it is $\sigma(\cX,\cX') $ lower semicontinuous at $X$. It is also well known that if $\rho$ is $\sigma(\cX,\cX') $ lower semicontinuous at some $X\in\cX$, then it has the Fatou property at $X$. Thus we can add two further equivalent statements in Theorem \ref{main-thm}:
\begin{enumerate}
    \item[(1'')] Every convex, decreasing, law-invariant functional  $\rho:\cX\to \R$ is  $\sigma(\cX,\cX') $ lower semicontinuous at every $X\in\cX$  such that  $X^-\in \cX_a$.
     \item[(1''')] Every convex, decreasing, law-invariant functional  $\rho:\cX\to \R$ has the representation \eqref{rep-formula} at every $X\in\cX$  such that  $X^-\in \cX_a$.
\end{enumerate}
\end{remark}

\appendix
\section{Proof of Proposition~\ref{MainLemma}}\label{proofAOCEA}

\begin{lemma}\label{movingorder}
Let $X_1,X_2$ be random variables  on a non-atomic probability space $(\Omega,\mathcal{F},\mathbb{P})$ and $X_1'$ be a random variable on a non-atomic probability space $(\Omega',\mathcal{F}',\mathbb{P}')$. If $X_1'\sim X_1\geq X_2$, then there exists a random variable $X_2'$ on $\Omega'$ such that $$X_1'\geq X_2'\sim X_2.$$
The conclusion still holds if both ``$\geq $'' are replaced by ``$\leq$''.
\end{lemma}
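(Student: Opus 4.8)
The plan is to build $X_2'$ via a \emph{comonotone (quantile) coupling} rather than trying to reproduce the full joint law of $(X_1,X_2)$ on $\Omega'$. The crucial observation is that the conclusion only demands that $X_2'$ have the correct \emph{marginal} law and satisfy the pointwise bound $X_2'\le X_1'$; it does \emph{not} require $(X_1',X_2')\sim(X_1,X_2)$. This is what makes the problem tractable, since faithfully transferring the joint law would in general force the existence of auxiliary randomness independent of $X_1'$ on $\Omega'$, which need not be available.

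Let $F_{X_i}$ denote the distribution function of $X_i$ and $q_{X_i}(p)=\inf\{t:F_{X_i}(t)\ge p\}$ its left-continuous quantile function. First I would record the elementary monotonicity fact that $X_2\le X_1$ forces $F_{X_2}\ge F_{X_1}$ and hence $q_{X_2}\le q_{X_1}$ pointwise on $(0,1)$. Granting for the moment a random variable $U'\sim\mathrm{Unif}(0,1)$ on $\Omega'$ with $q_{X_1}(U')=X_1'$ almost surely, I would simply set $X_2'=q_{X_2}(U')$. Then $X_2'\sim X_2$ because $U'$ is uniform and $q_{X_2}$ is the quantile function of $X_2$, while $X_2'=q_{X_2}(U')\le q_{X_1}(U')=X_1'$ by the pointwise inequality $q_{X_2}\le q_{X_1}$. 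This yields $X_1'\ge X_2'\sim X_2$, as required.

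The main obstacle is the construction of such a $U'$, and this is precisely where non-atomicity of $\Omega'$ enters. Writing $\nu=\mathrm{Law}(X_1')$ and letting $D$ be its (countable) set of atoms, I would define $U'$ piecewise: on the set $\{X_1'\notin D\}$ put $U'=F_{X_1}(X_1')$, and on each level set $\{X_1'=a\}$ with $a\in D$, use non-atomicity of the restricted probability space $(\{X_1'=a\},\bP')$ to choose $U'$ there so that the image of $\bP'$ restricted to $\{X_1'=a\}$ is Lebesgue measure on the interval $\big(F_{X_1}(a-),F_{X_1}(a)\big]$ (of length $\bP'(X_1'=a)$). Gluing these pieces yields $U'\sim\mathrm{Unif}(0,1)$, and a direct check on the two types of pieces shows $q_{X_1}(U')=X_1'$ almost surely: on an atom $a$ one has $q_{X_1}(u)=a$ for all $u\in\big(F_{X_1}(a-),F_{X_1}(a)\big]$, while off the atoms $q_{X_1}(F_{X_1}(x))=x$ for $\nu$-a.e.\ $x$. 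I expect the only delicate points to be the routine measure-theoretic verifications that the pushforward of $\bP'$ under $U'$ is Lebesgue measure on $(0,1)$ and that the exceptional null sets can be discarded.

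Finally, the variant with both inequalities reversed follows formally by applying the already-proved case to $-X_1,-X_2,-X_1'$: from $X_1\le X_2$ we get $-X_1'\sim -X_1\ge -X_2$, so there is $W$ with $-X_1'\ge W\sim -X_2$, and $X_2':=-W$ satisfies $X_1'\le X_2'\sim X_2$.
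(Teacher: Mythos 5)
Your proof is correct and follows essentially the same route as the paper: both arguments construct $X_2'=q_{X_2}(U')$ from a uniform random variable $U'$ satisfying $X_1'=q_{X_1}(U')$ a.s., using the pointwise inequality $q_{X_2}\le q_{X_1}$ forced by $X_1\ge X_2$. The only difference is that the paper cites F\"ollmer--Schied \cite[Lemma A.32]{HS4} for the existence of such a $U'$ on a non-atomic space, whereas you rebuild that representation by hand via the atom decomposition (and you make the ``$\le$'' variant explicit by passing to negatives, which the paper leaves implicit); both of these are valid elaborations of the same steps.
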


\begin{proof}
Suppose that $X_1'\sim X_1\geq X_2$. 
Let $F_i$ and $q_i$ be the CDF and quantile of $X_i$, respectively. Then $q_1$ is also a quantile function of $X_1'$.
Since $X_1 \geq X_2$, $F_1\leq F_2$,  and thus $q_1\geq q_2$. 
Recall that there exists a random variable with uniform distribution on  $(0,1)$ such that $X_1'=q_1(U)$ (\cite[Lemma A.32]{HS4}). Thus
$$X_1'=q_1(U)\geq q_2(U)\sim X_2,$$
by \cite[Lemma A.23]{HS4}. Thus  it is enough to let $X_2'=q_2(U)$.
\end{proof}

\begin{proof}[Proof of Proposition~\ref{MainLemma}]
Assume that (4) holds. We show that (2) holds. Let   $A\in\mathcal{F}$ with $\bP(A)>0$,  $X\in \cX$ supported in $A$,  and   $\varepsilon>0$ be given. Put $A_n=\{\abs{X}\geq n\}$ for $n\in\N$. Then $A_n\downarrow\emptyset$.  By the assumption (4), there exist natural numbers $(n_i)_{i=1}^k$   and random variables  $(Z_{ i})_{i=1}^k$ such that $Z_i\sim X \one_{A_{n_i}}$ for $i=1,\dots,k$, all $Z_i$'s are supported in $A$. Moreover, a convex combination $\sum_{i=1}^k \lambda_iZ_{i}$ satisfies  $$\Bignorm{\sum_{i=1}^k \lambda_iZ_{i}}<\varepsilon.$$
Since $\{X \one_{A_{n_i}}\neq 0\}=\{ \abs{X}\geq {n_i}\}=A_{n_i}$,
$$Z_{ i}\vert_{\{Z_{ i} \neq 0\}}\sim X\vert_{A_{n_i}}.$$ Since $Z_i$ is supported in $A$, $\{Z_i\neq 0\}\subset A$; since $X$ is supported in $A$, $A_{n_i}\subset A$. Thus it is  immediate to see that   $\bP(A\backslash \{Z_i\neq 0\})=\bP(A\backslash A_{n_i})$. By non-atomicity, we can find a random variable $W_i$ such that $$W_i\vert_{A\backslash\{Z_{ i}\neq0\}}\sim X\vert_{A\backslash A_{n_i}}\text{ and }W_i\text{ vanishes off }A\backslash\{Z_{ i}\neq0\}.$$
For $i=1,\dots,k$, set $X_i=Z_i+W_i$. Clearly, $X_i\sim X$ for all $i$ and $X_i$'s are supported in $A$. Moreover, since $X$ is bounded by $n_i$ on $A_{n_i}^c$, $W_{ i}$ is bounded by $n_i$. In particular, $W_i\in L^\infty\subset \cX_a$. Thus $V:=\sum_{i=1}^k\lambda_iW_i\in\cX_a$.  Clearly, $V$ is  supported in $A$ and $\norm{\sum_{i=1}^k\lambda_iX_i-V}=\norm{\sum_{i=1}^k\lambda_iZ_i}<\varepsilon$.  This proves that (4)$\implies$(2).

Taking $A=\Omega$, the same argument gives (3)$\implies$(1). Since (2)$\implies$(1) and (4)$\implies$(3) are obvious, we have (4)$\implies$(2)$\implies$(1) and (4)$\implies$(3)$\implies$(1). To complete the proof, we show that (1)$\implies$(3)$\implies$(4).

Assume that (1) holds. We  show that (3) holds. Let $X \in \cX_+ $, $A_n\downarrow \emptyset$ and  $\varepsilon >0$ be given.
By the assumption (1), there is a convex combination $\sum^{m_1}_{i=1}\alpha_iX_i$ such that $X_i\sim X$ for $1\leq i\leq m_1$ and $\norm{\sum^{m_1}_{i=1}\alpha_iX_i - V} < 1$ for some  $V\in\cX_a$. In view of $\abs{a^+-b^+}\leq \abs{a-b}$ and $\sum^{m_1}_{i=1}\alpha_iX_i\geq 0$, we may replace $V$ with $V^+$ so that $V\geq 0$.
For any $i=1,\dots,m_1$, by $X_i\sim X\geq X\one_{A_i} $ and Lemma \ref{movingorder}, there exists a random variable  $Z_{i}$ such that  $$ X_i\geq  Z_i\sim X\mathbf{1}_{A_{i}}.$$
Set $$U_1:=V \wedge\Big(\sum_{i=1}^{m_1}\alpha_iZ_i\Big).$$
Clearly, $0\leq U_1\leq V$ so that  $U_1\in\cX_a $. In view of $a-a\wedge b= (a-b)^+ $, since $ {\sum_{i=1}^{m_1}\alpha_i Z_i}\leq  \sum_{i=1}^{m_1}\alpha_iX_i$, we have  $$\Bignorm{\sum^{m_1}_{i=1}\alpha_iZ_{i}-U_1} =\Bignorm{\Big(\sum^{m_1}_{i=1}\alpha_iZ_{i}-V\Big)^+}\leq \Bignorm{\Big(\sum^{m_1}_{i=1}\alpha_iX_{i}-V\Big)^+} <1.$$

Applying the same arguments to  the sequence $\{A_{m_1+n}\}_{n=1}^\infty$, we obtain $m_2>m_1$, $(Z_i)_{i=m_1+1}^{m_2}$, and $U_2\in\cX_a$ such that $Z_i\sim X\mathbf{1}_{A_i}$ for $i=m_1+1,\dots, m_2$ and $
\Bignorm{\sum_{i=m_1+1}^{m_2}\alpha_iZ_i-U_2}<\frac{1}{2}$.
Repeating this process, we get $(Z_i)_{i=1}^\infty$, convex combinations $(\sum_{i=m_{j-1}+1}^{m_{j}}\alpha_iZ_i)$, and $(U_j)\subset \cX_a$ such that
\begin{align*}
 Z_i\sim X\mathbf{1}_{A_i}\quad\text{ for each }i\in\N,
\end{align*}
\begin{align}\label{Cons2}
  \Bignorm{ \sum_{i=m_{j-1}+1}^{m_{j}}\alpha_iZ_i-U_j }<\frac{1}{j}\quad\text{for each }j\in\N.
\end{align}

For a random variable $W$, let $W^*$ be its decreasing rearrangement given by $W^*(t)=\inf\big\{\lambda>0 : \bP(\abs{W}>\lambda)\leq t\big\}$, $t\in(0,1)$.
Let $W\in \cX'$.  By the Hardy-Littlewood Inequality (\cite[Chp 2, Theorem 2.2]{BS88}),
$$\abs{\mathbb{E}[WZ_n]}\leq \int_0^1 W^*(Z_n)^*\,\mathrm{d}t\leq \int_0^1 W^*X^* \mathbf{1}_{[0,\mathbb{P}(A_n)]}\,\mathrm{d}t\rightarrow0, $$
since $W^*X^*\in L^1$ (\cite[Chp 2, Theorem 2.6]{BS88}) and $\mathbb{P}(A_n)\rightarrow0$.   Thus  as $j\rightarrow\infty$,
\begin{align}\label{2.4}
\mathbb{E}\Big[W \sum_{i=m_{j-1}+1}^{m_{j}}\alpha_iZ_i\Big]\rightarrow 0.
\end{align}
Since $W$ acts a bounded linear functional on $\cX$, it follows from \eqref{Cons2} and \eqref{2.4} that
$$\lim_j\mathbb{E}[WU_j]=0.$$
That is, $U_j\stackrel{\sigma(\cX,\cX')}{\longrightarrow}0$.
Recall from \cite[Lemma 3.3]{GLX} that $(\cX_a)^* = \cX'$. Thus $(U_i)$ converges to $0$ weakly in $\cX_a$.

Let $\varepsilon>0$. Take $j_0>\frac{2}{\varepsilon}$.
By Mazur's Theorem, $0\in \overline{\mathrm{co}(U_j)_{j\geq j_0}}^{\|\cdot\|}$. Thus there is a convex combination $\sum^{j_1}_{j=j_0}\beta_jU_j$ such that $$\Bignorm{\sum^{j_1}_{i=j_0}\beta_jU_j} <\frac{\varepsilon}{2}.$$
Let
\begin{align}\label{define}
    Z = \sum^{j_1}_{j=j_0}\beta_j\Big(\sum^{m_{j}}_{i=m_{j-1}+1}\alpha_i Z_i\Big).
\end{align}
Then $Z\in \mathrm{co}(Z_i)_{i=m_{j_0-1}+1}^{m_{j_1}}$ and
\begin{align*}
\norm{Z} \leq &\sum^{j_1}_{j=j_0}\beta_j\Bignorm{\sum^{m_{j}}_{i=m_{j-1}+1}\alpha_iZ_i - U_j} + \Bignorm{\sum^{j_1}_{j=j_0}\beta_jU_j}\\
\leq &\sum^{j_1}_{j=j_0}\beta_j  \frac{1}{j}+\frac{\varepsilon}{2}
<\sum^{j_1}_{j=j_0}\beta_j\frac{\varepsilon }{2}+\frac{\varepsilon}{2} = \varepsilon.
\end{align*}
Finally, rewrite $Z=\sum_{i=1}^k \lambda_iZ_i$, where $k=m_{j_1}$ and $\lambda_i=0$ if $Z_i$ is not involved in defining $Z$ in \eqref{define}. This proves that  (3) holds.

Now we show that (3)$\implies$(4). Let $X\in\cX$, $A\in\mathcal{F}$ with $\bP(A)>0$, $A_n\downarrow \emptyset$, and  $\varepsilon >0$ be given. By passing to a subsequence, we may assume without loss of generality that $\sum_{n=1}^\infty \bP(A_n)<\frac{\bP(A)}{2}$.
Divide $A$ as a disjoint union $B\cup C$, where $\bP(B) = \bP(C) = \frac{\bP(A)}{2}$.
Applying (3) to $\abs{X}$, we find $Z_i' \sim \abs{X}\one_{A_{n_i}}$, $i=1,\dots,k$, such that a convex combination   satisfies  $\norm{\sum_{i=1}^k\lambda_iZ_i'}<\frac{\varepsilon}{2}$.
Since $$\bP\big((Z_1',\dots,Z_k')\neq 0\big)=\bP\Big(\bigcup_{i=1}^k\{Z_i'\neq0\}\Big) \leq \sum_{i=1}^k\mathbb{P}(A_{i})     \leq \mathbb{P}(B),$$
by non-atomicity, we can find a random vector $(S_1,\dots,S_k) $ such that  $$(S_1,\dots,S_k)\text{ is supported in }B,\quad (S_1,\dots,S_k)\sim (Z_1',\dots,Z_k').$$
In particular, $\sum_{i=1}^k\lambda_iS_i\sim \sum_{i=1}^k\lambda_iZ_i'$  so that $\norm{\sum_{i=1}^k\lambda_iS_i}<\frac{\varepsilon}{2}$.
Since $S_i \sim \abs{X}\one_{A_{n_i}} \geq X^+\one_{A_{n_i}}$, by Lemma \ref{movingorder}, there exists a random variable $Q_i$ such that $$S_i\geq Q_i\sim X^+\one_{A_{n_i}}  .$$
Then $0\leq \sum_{i=1}^k\lambda_iQ_i\leq \sum_{i=1}^k\lambda_iS_i$, so that $\norm{\sum_{i=1}^k\lambda_iQ_i}<\frac{\varepsilon}{2}$. Clearly, all $Q_i$'s are supported in $B$.
Similarly, we obtain $R_i\sim X^-\one_{A_{n_i}}$, $i=1,\dots, k$, such that all $R_i$'s are supported in $C$ and $\norm{\sum_{i=1}^k\lambda_iR_i}<\frac{\varepsilon}{2}$.
Since $Q_i$ is supported in $B$ and $R_i$ is supported in $C$, $Q_i-R_i \sim X\one_{A_{n_i}}$ and $Q_i-R_i$ is supported in $A$. Finally,
\begin{align*} \Bignorm{\sum_{i=1}^k\lambda_i(Q_i-R_i)} \leq \Bignorm{\sum_{i=1}^k\lambda_iQ_i} + \Bignorm{\sum_{i=1}^k\lambda_iR_i}<\frac{\varepsilon}{2}+\frac{\varepsilon}{2}=\varepsilon.
\end{align*}
The proof of (3)$\implies$(4) is completed by setting $Z_i=Q_i-R_i$.
 \end{proof}

\section{An r.i.\ space failing the AOCEA property}\label{poorexample}
Endow $(0,1)$ with the Lebesgue measure $\bP$. For a random variable $X$ on $(0,1)$, let $X^*$ be the decreasing rearrangement of $X$ defined by 
\[ X^*(t) = \inf\{s > 0: \bP(|X| > s) \leq t\},\ t\in (0,1).
\]
We refer to \cite[Chapter 2]{BS88} for detailed properties of decreasing rearrangement.

\begin{example}\label{ex2.8}
Let
 $\cX$ be the space of all random variables  $X$ on $(0,1)$ such that 
\[ \|X\|: = \sup_{n\in\N}\,n2^n\int^{\frac{1}{2^n\cdot n!}}_0X^*\,\mathrm{d}t < \infty.\]
Then $\cX$  is an r.i.\ space over $(0,1)$  that fails the AOCEA property.
\end{example}

\begin{proof} Note that $\one^*=\one$. Thus it is easy to see that $\one\in \cX$ and hence $\cX\neq\{0\}$.
For each $n\in\N$, define $\tau_n$ on $\cX$ by $\tau_n(X) = n2^n\int^{\frac{1}{2^n\cdot n!}}_0X^*\,\mathrm{d}t$.   We have the inequality
\[ \int^s_0 (X+Y)^* \,\mathrm{d}t \leq \int^s_0 X^*\,\mathrm{d}t + \int^s_0 Y^*\,\mathrm{d}t\]
for any $X,Y\in L^1(0,1)$ and all $s\in (0,1)$; see, e.g., \cite[p.\ 54]{BS88}.
Thus each $\tau_n$, and hence $\|\cdot\|$, satisfies the triangle inequality.  It is then clear that $\tau_n$ is a seminorm on $\cX$.  
Moreover, as $X^*$ is a decreasing function and $X^* \sim |X|$, 
\[ \|X\|_1 = \|X^*\|_1 \leq 2^n\cdot n! \int^{\frac{1}{2^n\cdot n!}}_0X^*\,\mathrm{d}t \leq 2^n\cdot n!\,\|X^*\|_1 = 2^n\cdot n!\,\|X\|_1. 
\]
So each $\tau_n$ is in fact a lattice norm on $\cX$ that is equivalent to the $L^1$-norm.
In particular, $\|\cdot\|$ is a lattice norm on $\cX$.  Law-invariance of $\|\cdot\|$ is obvious.
To see that $\cX$ is an r.i.\ space, it suffices to show the completeness of $\|\cdot\|$.
Let $(X_k)_{k=1}^\infty$ be a norm Cauchy sequence in $\cX$.  Since each $\tau_n\leq \|\cdot\|$, $(X_k)$ is Cauchy in $\tau_n$-norm for all $n$.  By equivalence of $\tau_n$ with $L^1$-norm, 
 there exists $X\in L^1$ such that $(X_k)$ converges to $X$ with respect to $\tau_n$ for all $n$.
In particular, 
\[ \sup_n\tau_n(X) \leq \sup_n\sup_k\tau_n(X_k) \leq \sup_k\|X_k\| < \infty.\]
Thus $X\in \cX$.
Given $\varepsilon > 0$, choose $k_0\in\N$ so that $\|X_k-X_j\| \leq \varepsilon$ if $k,j\geq k_0$.  If $k \geq k_0$, then for any $n\in \N$, 
\[  \tau_n(X_k-X) = \lim_j\tau_n(X_k-X_j) \leq \limsup_j\|X_k-X_j\| \leq \varepsilon.
\]
Hence $\norm{X_k-X}\leq \varepsilon$ for any $k\geq k_0$. This shows that $(X_k)$ converges to $X$ in $\|\cdot\|$-norm and completes the proof that $\cX$ is an r.i.\  space.
 
Next, we show that $\cX\neq L^\infty$ and $\cX$ fails the AOCEA property. For convenience, set $c_n = \frac{1}{2^n\cdot (n+1)!}$ for all $n\in \N$.  Let $X$ be the function
\[ X =\sum^\infty_{n=1}n!\one_{[c_{n+1},c_n)}.\]
Then $X\geq 0$ and it is decreasing so that $X^*=X$ a.s. For any $m\geq 2$,
\begin{align*}
\int^{\frac{1}{2^m\cdot m!}}_0X^* \,\mathrm{d}t&\leq  \frac{(m-1)!}{2^m\cdot m!}+\sum^\infty_{n=m}n!c_n = \frac{1}{m2^m}+ \sum^\infty_{n=m}\frac{1}{2^n(n+1)} \leq \frac{3}{m2^m}.
\end{align*}
It follows that $X\in \cX$ and that $\cX\neq L^\infty$.
Now suppose that there exists $Y \in \mathrm{co}\{Z: Z\sim X\}$ and $U \in \cX_a$ so that $\|Y-U\| < \frac{1}{4}$.
There exists $m_0\in\N$ so that $\|U\one_A\| < \frac{1}{4}$ if $\bP(A) \leq \frac{1}{2^{m_0}\cdot m_0!}$.
Thus, if $m\geq m_0$, then
$m2^m\int^{\frac{1}{2^{m}\cdot m!}}_0U^* < \frac{1}{4}$.
Write $Y$ as a convex combination $\sum^m_{j=1}b_jZ_j$, where $Z_j\sim X$ for all $j$.
We may assume that $m \geq m_0\geq 2$.
Choose measurable sets $A_j$, $1\leq j\leq m$, so that $\bP(A_j) = \frac{1}{m2^m\cdot m!}$ and that 
$\E[Z_j\one_{A_j}] = \int^{ \frac{1}{m2^m\cdot m!}}_0Z^*_j \,\mathrm{d}t$. Then
\[ \E[Z_j\one_{A_j}] \geq  \int^{c_m}_0X^* \,\mathrm{d}t \geq m!\,c_m= \frac{1}{(m+1)2^m}.\]
Let $A = \bigcup^m_{j=1}A_j$.  Then $\bP(A) \leq\frac{1}{2^m\cdot m!}$.  Therefore,
\[ \int^{\frac{1}{2^m\cdot m!}}_0Y^* \geq \E[Y\one_A]  \geq \sum^m_{j=1}b_j\E[Z_j\one_{A_j}]  \geq \frac{1}{(m+1)2^{m}}.
\]
Thus
\begin{align*}
\frac{1}{2} &\leq m2^m\int^{\frac{1}{2^m\cdot m!}}_0Y^* \,\mathrm{d}t \leq m2^m\Big[\int^{\frac{1}{2^m\cdot m!}}_0(Y-U)^* \,\mathrm{d}t+ \int^{\frac{1}{2^m\cdot m!}}_0U^* \,\mathrm{d}t\Big]\\
&\leq \|Y-U\| + \frac{1}{4}.
\end{align*}
Hence $\|Y-U\| \geq \frac{1}{4}$, contrary to the choice of $Y$ and $U$.
\end{proof}

\noindent\textbf{Acknowledgement.} The authors would like to thank  Dr.\ Felix-Benedikt Liebrich for  simplifying the proof of Lemma \ref{movingorder} and Professor Alexander Schied  and Professor Ruodu Wang  for beneficial discussions. We also thank the editor and reviewers for their constructive feedback that improves the quality of the paper.

{\footnotesize

}

\end{document}